\newenvironment{proposition-repeat}[1]{\begin{trivlist}
\item[\hspace{\labelsep}{\bf\noindent Proposition~\ref{#1} }]}%
{\end{trivlist}}
\newenvironment{corollary-repeat}[1]{\begin{trivlist}
\item[\hspace{\labelsep}{\bf\noindent Corollary~\ref{#1} }]}%
{\end{trivlist}}
\newenvironment{lemma-repeat}[1]{\begin{trivlist}
\item[\hspace{\labelsep}{\bf\noindent Lemma~\ref{#1} }]}%
{\end{trivlist}}
\newenvironment{theorem-repeat}[1]{\begin{trivlist}
\item[\hspace{\labelsep}{\bf\noindent Theorem~\ref{#1} }]}%
{\end{trivlist}}
\title{The read/write protocol complex is collapsible\thanks{A short 
version  to appear in the Springer LNCS proceedings of LATIN 2016. $\,$ Partially supported by UNAM-PAPIIT grant.}}
\author{Fernando Benavides\inst{1,2} \and Sergio Rajsbaum\inst{1}}
\institute{Instituto de Matem\'aticas, Universidad Nacional Aut\'onoma de M\'exico,\\ Ciudad Universitaria, D.F. 04510 M\'exico
\\ \email{rajsbaum@im.unam.mx}
\and
Departamento de Matem\'aticas y Estad\'istica, Universidad de Nari\~no, \\ San Juan de Pasto, Colombia
\\ \email{fandresbenavides@gmail.com}
}
\authorrunning{Fernando Benavides and Sergio Rajsbaum}
\begin{document}
\maketitle
\begin{abstract}
The celebrated \emph{asynchronous computability theorem}  provides a characterization of the class of decision tasks that can be solved in a wait-free manner by asynchronous processes that communicate by writing and taking atomic snapshots  of a shared memory. Several variations of the  model have been proposed (immediate snapshots and iterated immediate snapshots),  all  equivalent for wait-free solution of decision tasks, in spite of the fact that the protocol complexes that arise from the different  models are structurally distinct. The topological and combinatorial properties of these snapshot protocol complexes have been studied in detail, providing  explanations for why the asynchronous computability theorem holds in all the  models.

In reality concurrent systems do not provide processes with snapshot operations. Instead, snapshots are implemented (by a wait-free protocol) using operations that write and read individual shared memory locations. Thus, read/write protocols are also computationally equivalent to snapshot protocols. However, the structure of the read/write protocol complex has not been studied. In this paper we show that the read/write iterated protocol complex is collapsible (and hence contractible). Furthermore, we show that a distributed protocol that wait-free implements atomic snapshots in effect is performing the collapses.
\end{abstract}

\section{Introduction}

A  \emph{decision task} is the distributed equivalent of a function, where each process knows only part of the input, and after communicating with the other processes, each process computes part of the output. For example, in the $k$-\emph{set agreement} task processes have to agree on at most $k$ of their input values; when $k=1$ we get the  \emph{consensus} task~\cite{FischerLP85}.

A central concern in distributed computability is studying which tasks are solvable in a distributed computing model, as determined by the type of communication mechanism available  and the reliability of the processes. Early on it was shown that consensus is not solvable even if only one process can fail by crashing,
when asynchronous processes communicate by message passing~\cite{FischerLP85} or even by writing and reading a shared memory~\cite{LouiAA:87}. A graph theoretic characterization of the tasks solvable in the presence of at most one process failure appeared soon after~\cite{BiranMZ90}.

The  \emph{asynchronous computability theorem}~\cite{1999TopologicalStructureAsynchronous_HS} exposed that moving from tolerating one process  failure, to any number of process failures, yields a characterization of the class of decision tasks that can be solved in a wait-free manner by asynchronous processes based on simplicial complexes, which are higher dimensional versions of graphs. In particular, $n$-set agreement is not wait-free solvable, even for  $n+1$ processes~\cite{1993GeneralizedFLPImposibility_BG,1999TopologicalStructureAsynchronous_HS,SaksZ00}.

Computability theory through combinatorial topology has evolved  to encompass arbitrary malicious failures, synchronous and partially synchronous processes, and various communication mechanisms~\cite{2013DistributedCombinatorialTopology_HKR}. Still,  the original wait-free model of the asynchronous computability theorem, where crash-prone processes that communicate wait-free by writing and reading a shared memory is  fundamental. For instance, the question of solvability in other models (e.g. $f$ crash failures), can in many cases be reduced to the question of wait-free solvability~\cite{BorowskyGLR01,HerlihyR12}.

More specifically, in the \emph{AS model} of~\cite{2013DistributedCombinatorialTopology_HKR} each process can write its own location of the shared-memory, and it is able to  read the whole shared memory in one  atomic step, called a \emph{snapshot}. The characterization  is based on the \emph{protocol complex,} which is a geometric representation of the various possible executions of a protocol. Simpler variations of this  model have been considered. In the \emph{immediate snapshot} (IS) version~\cite{AttiyaR2002,1993GeneralizedFLPImposibility_BG,SaksZ00}, proceses can execute  a combined write-snapshot operation. The \emph{iterated immediate snapshot} (IIS) model~\cite{1997SimpleAlgorithmicallyReasoned_BG} is even simpler to analyze, and can be extended (IRIS) to analyze partially synchronous models~\cite{2008IteratedRestrictedImmdediate_RRT}. Processes communicate by accessing a sequence of shared arrays, through immediate snapshot operations, one such operation in  each array. The success of the entire approach hinges on the fact that the topology of the protocol complex of a model determines critical information about the solvability of the task and, if solvable, about the complexity of solution~\cite{HoestS97}.

All these  snapshot models, AS, IS,  IIS and IRIS can solve exactly the same set of tasks. However, the protocol complexes that arise from the different  models are structurally distinct. The combinatortial topology  properties of these  complexes have been studied in detail, providing  insights for why some
tasks are solvable and others are not in a  model.

\paragraph{\bf Results}
In reality concurrent systems do not provide processes with  snapshot operations. Instead, snapshots are implemented (by a wait-free protocol) using operations that write and read individual shared memory locations~\cite{1990AtomicSnapshotSharedMemory_AADGM}. Thus, read/write protocols are also computationally equivalent to snapshot protocols. However, the structure of the read/write protocol complex has not been studied. Our results are the following.
\begin{enumerate}
\item The one-round read/write protocol complex is collapsible to the IS protocol, i.e. to a chromatic subdivision of the input complex. The collapses can be performed simultaneously in entire orbits of the natural symmetric group action. We use ideas from~\cite{2013TopologyComplexView_K}, together with distributed computing techniques of partial orders.
\item Furthermore, the distributed protocol that wait-free implements  immediate snapshots of~\cite{1993ImmediateAtomicSnapshot_BG,2010RecursionDictributedComputing_GR} in effect is performing the collapses.
\item Finally,  also the multi-round iterated read/write complex is collapsible. We use ideas from~\cite{2014IteratedChromaticCollapsible_GMT}, together with carrier maps e.g.~\cite{2013DistributedCombinatorialTopology_HKR}.
\end{enumerate}
All omitted proofs are  in the Appendix.
\paragraph{\bf Related work}
The one-round immediate snapshot protocol complex is the simplest, with an elegant combinatorial representation; it is a chromatic subdivision of the input complex~\cite{2013DistributedCombinatorialTopology_HKR,2012ChromaticSubdivision_K}, and so is the (multi-round) IIS protocol~\cite{1997SimpleAlgorithmicallyReasoned_BG}. The multi-round (single shared memory array) IS protocol complex is harder to analyze, combinatorially it can be shown to be a pseudomanifold~\cite{AttiyaR2002}. IS and IIS protocols are homeomorphic to the input complex. An AS protocol complex is not generally homeomorphic to the underlying input complex, but it is homotopy equivalent to it~\cite{2004NoteHomotyTypeASComplex_Havlicek}. The span of~\cite{1999TopologicalStructureAsynchronous_HS} provides an homotopy equivalence  of the (multi-round) AS protocol complex to the input complex~\cite{2004NoteHomotyTypeASComplex_Havlicek},  clarifying the basis of the obstruction method~\cite{2000CompObstructions_Havlicek} for detecting impossibility of solution of tasks.

Later on stronger results were proved, about the collapsibility of the protocol complex. The one-round IS protocol complex is collapsible~\cite{2014TopologyISComplex_K} and homeomorphic to closed balls. The structure of the AS is more complicated, it was known to be contractible~\cite{2004NoteHomotyTypeASComplex_Havlicek,2013DistributedCombinatorialTopology_HKR}, and then shown to be collapsible (one-round) to the IS complex~\cite{2013TopologyComplexView_K}. The IIS (multi-round) version was shown to be collapsible too~\cite{2014IteratedChromaticCollapsible_GMT}.

There are several wait-free implementations of atomic snapshots starting with~\cite{1990AtomicSnapshotSharedMemory_AADGM}, but we are aware of only  two algorithms that implement  immediate snapshots; the original of~\cite{1993ImmediateAtomicSnapshot_BG}, and its recursive version~\cite{2010RecursionDictributedComputing_GR}.

\section{Preliminaries}

\subsection{Distributed computing model}\label{subsec:model}

The basic model we consider is the one-round  {\it read/write} model ($\mathsf{WR}$), e.g.~\cite{HerlihyShavitBook2008}. It consists of $n+1$ processes  denoted by the numbers $[n]=\{0,1,\ldots,n\}$. A process is a deterministic (possibly infinite) state machine. Processes  communicate through a shared memory array $\mathsf{mem}[0\ldots n]$ which consists of $n+1$ single-writer/multi-reader atomic registers. Each process  accesses the shared memory by invoking the atomic operations $\mathsf{write}(x)$ or $\mathsf{read}(j)$, $0\leq j\leq n$. The $\mathsf{write}(x)$ operation is used by process $i$  to write  value $x$ to register $i$, and process $i$  can invoke $\mathsf{read}(j)$ to read register $\mathsf{mem}[j]$, for any $0\leq j\leq n$. Each process $i$ has an input value, which may be its own id $i$. In its first operation, process $i$ writes its input  to $\mathsf{mem}[i]$, then it reads each of the $n+1$ registers, in an arbitrary order.
Such a sequence of operations, consisting of a write followed by all the reads is abbreviated by  $\mathsf{WScan}(x)$.

An \emph{execution} consists of an interleaving of the operations of the processes, and we assume any interleaving of the operations is a possible execution. We
may also consider an execution where only a subset of processes participate, consisting of an interleaving of the operations of those processes. These assumptions represent a wait-free model where any number of processes may fail by crashing.

In more detail, an execution is described as a set of atomic operations together with the irreflexive and transitive partial order given by: $op$ precedes $op'$ if $op$ was completed before $op'$. If $op$ does not precede $op'$ and viceversa, the operations are called {\it concurrent}. The set of  values read in an execution $\alpha$ by  process $i$ is called the {\it local view} of $i$ which is denoted by $view(i,\alpha)$. It consists of pairs $(j,v)$, indicating that the value $v$ was read from the $j$-th register. The set of all local views in the execution $\alpha$ is called the {\it view} of $\alpha$ and it is denoted by $view(\alpha)$. Let $\mathcal{E}$ be a set of executions of the WR model.  Consider the equivalence relation on $\mathcal{E}$ given by: $\alpha\sim \alpha'$ if  $view(\alpha)=view(\alpha')$. Notice that for every execution $\alpha$ there exists an equivalent \emph{sequential} execution $\alpha'$ with no concurrent operations. In other words, if $op$ and $op'$ are concurrent operations in $\alpha$ we can suppose that $op$ was executed immediately before $op'$ without modifying any views. Thus, we often consider only sequential  executions $\alpha$, consisting of  a linear order on the set of all write and read operations.

Two other models can be derived from the WR model. In the \emph{iterated} WR model, processes communicate through a sequence of arrays. They all go through the sequence of arrays $\mathsf{mem}_0$, $\mathsf{mem}_1\ldots$ in the same order, and in the $r$-th round, they access the $r$-th array, $\mathsf{mem}_r$, exactly as in the one-round version of the WR model. Namely,  process $i$ executes one write to $\mathsf{mem}_r[i]$  and then reads one by one all entries $j$, $\mathsf{mem}_r[j]$, in arbitrary order. In the \emph{non-iterated, multi-round} version of the WR model, there is only one array $\mathsf{mem}$, but processes can execute several rounds of writing and then reading one by one the entries of the array. The \emph{immediate snapshot} model (IS)~\cite{1993GeneralizedFLPImposibility_BG,SaksZ00}, consists of a subset of executions of the WR one round model. Namely, all the executions where the operations can be organized in \emph{concurrency classes}, each one consisting a set of writes  by the set of processes participating in the concurrency class, followed by a read to all registers by each of these processes. See Section~\ref{sec:addPropExec}.

\subsection{Algorithm IS}\label{subsec:Algo_IS}

Consider the recursive algorithm IS of~\cite{2010RecursionDictributedComputing_GR} for the iterated WR model, presented in Figure \ref{alg:IS}. Processes go trough a series of disjoint shared memory arrays $\mathsf{mem}_0,\mathsf{mem}_1,\ldots,\mathsf{mem}_n$. Each array $\mathsf{mem}_k$ is accessed by process $i$ invoking $\mathsf{WScan}(i)$ in the recursive call $\mathrm{IS}(n+1-k)$. Process $i$ executes  $\mathrm{WScan}(i)$ (line $(1)$), by performing first $\mathsf{write}(i)$,  followed by $\mathsf{read}(j)$ for each $j\in [ n]$, in an arbitrary order. The set of  values read (each one with its location) is what the invocation of $\mathsf{WScan}(i)$ returns. In  line $(2)$ the process checks if $view$ contains $n+1-k$ id's, else $\mathrm{IS}(n-k)$ is again invoked on the next shared memory in  line $(3)$. It is important to note that in each recursive call $\mathrm{IS}(n+1-k)$ at least one process returns with $|view|=n+1-k$, given that $n+1-k$ processes  invoked $\mathrm{IS}$. For example, in the first recursive call $\mathrm{IS}(n+1)$ the last process to write reads $n+1$ values and terminates the algorithm.

\begin{figure}[h]
\begin{center}
\fbox{\parbox{5.5cm}{
{\bf Algorithm} IS($n+1$)\\
(1) $view\leftarrow \mathsf{WScan}(i)$\\
(2) {\bf if} $|view|=n+1$ {\bf then} return $view$\\
(3) {\bf else} return IS($n$).
}}
\end{center}
\caption{Code for process $i$}
\label{alg:IS}
\end{figure}

\vspace{-0.3cm}

Every execution of the $\mathrm{IS}$ protocol  can be represented by a finite sequence $\alpha=\alpha_0,\alpha_1,\ldots,\alpha_l$ with $\alpha_k$ an execution of the WR one round model where every process that takes a step in $\alpha_k$ invokes the recursive call with $\mathrm{IS}(n+1-k)$. Since at least one process terminates the algorithm the length $l(\alpha)=l+1$ is at most $n+1$. The last returned local view in execution $\alpha$  for process $i$ is denoted $view(i,\alpha)$, and the set of all local views is  denoted by  $view(\alpha)$.

Denote by $\mathcal{E}_l$ the set of views of all executions $\alpha$ with $l(\alpha)=l+1$. Then $\mathcal{E}_n\subseteq\cdots\subseteq\mathcal{E}_0$. In particular, $\mathcal{E}_0$ corresponds to the views of executions of the one round WR of
 Section~\ref{subsec:model}.
 Also, $\mathcal{E}_n$ corresponds to the views of the immediate snapshot model, see Theorem $1$ of \cite{2010RecursionDictributedComputing_GR}.

\section{Definition and properties of the protocol complex}
\label{sec:Protocol_Complex}

Here we define the protocol complex of the write/read model and other models, which arise from the sets $\mathcal{E}_i$ mentioned in the previous section.

\subsection{Additional properties about executions}
\label{sec:addPropExec}

Recall from Section~\ref{subsec:model} that an execution  can be seen as a linear order on the set of write and read operations. For a subset $I\subseteq[n]$ let
\begin{equation*}
\mathcal{O}_I=\{w_i,r_i(j) \ : \ i\in I, \ j\in[n]\}.
\end{equation*}
with $I=\mathcal{O}_i=\emptyset$. A \emph{$wr$-execution on $I$} is a pair $\alpha=(\mathcal{O}_I,\rightarrow_{\alpha})$ with $\rightarrow_{\alpha}$ a linear order on $\mathcal{O}_I$ such that $w_i\rightarrow_{\alpha}r_i(j)$ for all $j\in[n]$. The set $I$ is called the $\mathtt{Id}$ set of $\alpha$ which is denoted by $\mathtt{Id}(\alpha)$. Hence the view of $i$ is $view(i,\alpha)=\{j\in I \ : \ w_j\rightarrow_{\alpha}r_i(j)\}$ and the view of $\alpha$ is $view(\alpha)=\{(i,view(i,\alpha)) \ : \ i\in I\}$. Note the chain $w_i\rightarrow_{\alpha}r_i(j_0)\rightarrow_{\alpha}\cdots\rightarrow_{\alpha}r_i(j_n)$ represents the invoking of $\mathsf{WScan}$ by the process $i$ in the $wr$-execution $\alpha$. Consider a $wr$-execution $\alpha$ and suppose that the order in which the process $i$ reads the array $\mathsf{mem}[0\ldots n]$ is given by $r_i(j_0)\rightarrow_{\alpha}\cdots\rightarrow_{\alpha}r_i(j_n)$. If every write operation $w_k$ satisfies $w_k\rightarrow_{\alpha} r_i(j_0)$ or $r_i(j_n)\rightarrow_{\alpha} w_k$ then $view(i,\alpha)$ corresponds to an atomic snapshot.

As a consequence, every execution in the snapshot model and immediate snapshot model corresponds to an execution in the write/read model. For instance in the $wr$-execution
\begin{multline*}
\alpha:w_2\rightarrow r_2(0)\rightarrow w_0\rightarrow r_0(0)\rightarrow r_0(1)\rightarrow r_0(2)\rightarrow w_1\rightarrow\\
\rightarrow r_1(0)\rightarrow r_2(1)\rightarrow r_1(1)\rightarrow r_2(2)\rightarrow r_1(2)
\end{multline*}
the $view(0,\alpha)=\{0,2\}$ and $view(1,\alpha)=[2]$ are immediate snapshots, this means the processes $0$ and $2$ could have read the array instantaneously.
In contrast,  the $view(2,\alpha)=\{1,2\}$ does not correspond to a snapshot.
For the following consider the process $j$ such that $w_i\rightarrow_{\alpha}w_j$ for all $i$.

\begin{proposition}\label{prop:one_process_terminates}
Let $\alpha$ be a $wr$-execution on $I$. Then there exist $j\in I$ such that $view(j,\alpha)=I$.
\end{proposition}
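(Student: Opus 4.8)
The plan is to identify the process $j\in I$ whose write operation $w_j$ comes last among all writes in the linear order $\rightarrow_\alpha$, and show that this process reads everyone. Precisely, let $j\in I$ be the unique index with $w_i\rightarrow_\alpha w_j$ for all $i\in I\setminus\{j\}$; this exists because $\rightarrow_\alpha$ is a linear order on the finite set $\mathcal{O}_I$, so the writes $\{w_i : i\in I\}$ have a maximum. (The excerpt already sets up exactly this process $j$ in the sentence preceding the statement.)

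The key step is then a short chain argument. Fix any $i\in I$. By the definition of a $wr$-execution, $w_j\rightarrow_\alpha r_j(i)$, since the defining condition requires $w_j\rightarrow_\alpha r_j(k)$ for every $k\in[n]$, in particular for $k=i$. By choice of $j$ we have $w_i\rightarrow_\alpha w_j$ (trivially if $i=j$, using reflexivity is not needed since then $w_i=w_j$ and we directly have $w_j\rightarrow_\alpha r_j(j)$). Combining these with transitivity of $\rightarrow_\alpha$ gives $w_i\rightarrow_\alpha w_j\rightarrow_\alpha r_j(i)$, hence $w_i\rightarrow_\alpha r_j(i)$. By the definition $view(j,\alpha)=\{k\in I : w_k\rightarrow_\alpha r_j(k)\}$, this shows $i\in view(j,\alpha)$. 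Since $i\in I$ was arbitrary, $I\subseteq view(j,\alpha)$; the reverse inclusion $view(j,\alpha)\subseteq I$ is immediate from the definition of the view. Therefore $view(j,\alpha)=I$.

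I do not anticipate a genuine obstacle here: the statement is essentially unwinding definitions, and the only thing to be careful about is the degenerate case $i=j$ (where the ``$w_i\rightarrow_\alpha w_j$'' step is vacuous and one appeals directly to the $wr$-execution axiom $w_j\rightarrow_\alpha r_j(j)$) and the trivial edge case $I=\emptyset$, which the excerpt's convention $\mathcal{O}_\emptyset=\emptyset$ already rules out as a nonempty $\mathtt{Id}$ set. The mild subtlety worth flagging is that this $j$ is the process that, in the algorithmic reading, is the last to write and therefore sees all $|I|$ identifiers — matching the remark in Section~\ref{subsec:Algo_IS} that in each recursive call at least one process returns with a full view — so the combinatorial statement and the operational intuition coincide.
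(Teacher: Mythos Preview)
Your proof is correct and follows exactly the approach the paper intends: the sentence immediately preceding the proposition (``For the following consider the process $j$ such that $w_i\rightarrow_{\alpha}w_j$ for all $i$'') already singles out the last writer, and your argument is precisely the routine verification that this $j$ has $view(j,\alpha)=I$. The paper gives no further proof beyond that hint, so your write-up is a faithful expansion of it.
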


Let $\alpha$ be a $wr$-execution. For $0\leq k\leq n$, define $\mathtt{Id}_k(\alpha)=\{j\in\mathtt{Id}(\alpha) \ : \ |view(j,\alpha)|=n+1-k\}$. An \emph{IS-execution} is a finite sequence $\alpha=\alpha_0,\ldots,\alpha_l$ such that $\alpha_0$ is a $wr$-execution on $[n]$, and $\alpha_{k+1}$ is a $wr$-execution on $\mathtt{Id}(\alpha_k)-\mathtt{Id}_{k}(\alpha_k)$. Given an $IS$-execution $\alpha$, Proposition~\ref{prop:one_process_terminates} implies $l(\alpha)\leq n+1$. Moreover $\mathtt{Id}(\alpha_{k+1})\subseteq\mathtt{Id}(\alpha_{k})$ for all $0\leq k\leq l-1$. Hence $|\mathtt{Id}(\alpha_k)|\leq n+1-k$. Executions $\alpha,\alpha'$ are \emph{equivalent} if $view(\alpha)=view(\alpha')$, denoted $\alpha\sim\alpha'$.

\begin{lemma}\label{lem:equal_interval}
Let $\alpha$ and $\alpha'$ be $IS$-executions with $l(\alpha)=l(\alpha)$. Given $0\leq k\leq l$, (1) If $\alpha\sim\alpha'$ then $\mathtt{Id}(\alpha_k)=\mathtt{Id}(\alpha'_k)$. (2) If $\alpha_k\sim\alpha'_k$ then $\alpha\sim\alpha'$.
\end{lemma}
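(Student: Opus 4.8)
The plan is to read off from $view(\alpha)$ a single number $m(j)$ for each process $j$ — the round in which $j$ returns — and to check that $m(j)$, and through it each set $\mathtt{Id}(\alpha_k)$, is a function of $view(\alpha)$ alone; both parts of the lemma then drop out.

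First I would pin down this correspondence. Because $\mathtt{Id}(\alpha_{k+1})=\mathtt{Id}(\alpha_k)-\mathtt{Id}_k(\alpha_k)$, the sets $\mathtt{Id}(\alpha_0)\supseteq\mathtt{Id}(\alpha_1)\supseteq\cdots\supseteq\mathtt{Id}(\alpha_l)$ are nested, and since the execution is carried out to completion ($\mathtt{Id}(\alpha_l)=\mathtt{Id}_l(\alpha_l)$) the ``removed'' sets $\mathtt{Id}_0(\alpha_0),\dots,\mathtt{Id}_l(\alpha_l)$ partition $[n]=\mathtt{Id}(\alpha_0)$, where $\mathtt{Id}_k(\alpha_k)=\mathtt{Id}(\alpha_k)\setminus\mathtt{Id}(\alpha_{k+1})$ (with $\mathtt{Id}(\alpha_{l+1})=\emptyset$) is the set of processes present in round $k$ but not in round $k+1$. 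So each $j$ lies in exactly one of them, say $j\in\mathtt{Id}_{m(j)}(\alpha_{m(j)})$, and $j\in\mathtt{Id}(\alpha_k)$ precisely when $k\le m(j)$. The definition of $\mathtt{Id}_{m(j)}$ forces $|view(j,\alpha_{m(j)})|=n+1-m(j)$, and since $view(j,\alpha)$ is by definition the last returned view of $j$ we have $view(j,\alpha)=view(j,\alpha_{m(j)})$. Hence
\begin{equation*}
m(j)=n+1-|view(j,\alpha)|,\qquad
\mathtt{Id}(\alpha_k)=\{\,j\in[n]\ :\ |view(j,\alpha)|\le n+1-k\,\}
\end{equation*}
(and in fact $view(j,\alpha)=\mathtt{Id}(\alpha_{m(j)})$, so the whole view of $\alpha$ is encoded in the chain of $\mathtt{Id}$'s; only the cardinalities are used below).

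Part (1) is then immediate: $\alpha\sim\alpha'$ gives $|view(j,\alpha)|=|view(j,\alpha')|$ for every $j$, and the displayed formula for $\mathtt{Id}(\alpha_k)$ — which uses nothing but these cardinalities — yields $\mathtt{Id}(\alpha_k)=\mathtt{Id}(\alpha'_k)$ for all $0\le k\le l$, the hypothesis $l(\alpha)=l(\alpha')$ ensuring the admissible $k$'s coincide. For part (2), assume $\alpha_k\sim\alpha'_k$ for every $k$. Each equality $view(\alpha_k)=view(\alpha'_k)$ determines $\mathtt{Id}_k(\alpha_k)=\{j:|view(j,\alpha_k)|=n+1-k\}$, so $\mathtt{Id}_k(\alpha_k)=\mathtt{Id}_k(\alpha'_k)$ for all $k$; as these sets partition $[n]$ on both sides, every process has the same return round $m(j)$ in $\alpha$ and in $\alpha'$, and therefore $view(j,\alpha)=view(j,\alpha_{m(j)})=view(j,\alpha'_{m(j)})=view(j,\alpha')$, i.e.\ $\alpha\sim\alpha'$.

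I expect the only point needing real care, as opposed to unwinding definitions, to be the claim that $m(j)$ is well defined — that every process does return by round $l$, so that the sets $\mathtt{Id}_k(\alpha_k)$ genuinely partition $[n]$. This is exactly what the discussion around Proposition~\ref{prop:one_process_terminates} supplies: an $IS$-execution is finite with $l(\alpha)\le n+1$ and is run to completion, so $\mathtt{Id}(\alpha_l)-\mathtt{Id}_l(\alpha_l)=\emptyset$. One must also be careful to invoke $|view(j,\alpha_m)|=n+1-m$ at the round $m$ in which $j$ actually returns, but that is how $m(j)$ was defined, so nothing further is needed. Everything else is a routine check with the definitions of $\mathtt{Id}(\cdot)$, $\mathtt{Id}_k(\cdot)$, and $view(\cdot)$.
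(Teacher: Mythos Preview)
Your argument is correct and tracks the same idea as the paper's (very terse) proof: both parts rest on the observation that $\mathtt{Id}(\alpha_k)$ is determined by the cardinalities $|view(j,\alpha)|$, via $\mathtt{Id}(\alpha_k)=\{j:|view(j,\alpha)|\le n+1-k\}$, and conversely that $view(\alpha)$ is assembled from the $view(\alpha_k)$'s; your write-up simply makes this explicit where the paper leaves it implicit. One small correction: an $IS$-execution as defined in the paper need not be run to completion (nothing forces $\mathtt{Id}(\alpha_l)=\mathtt{Id}_l(\alpha_l)$), so your justification that every $m(j)$ is well defined via Proposition~\ref{prop:one_process_terminates} is not quite right; but the displayed formula for $\mathtt{Id}(\alpha_k)$ still holds if for $j\in\mathtt{Id}(\alpha_l)$ you take $m(j)=l$ (then $|view(j,\alpha)|\le|\mathtt{Id}(\alpha_l)|\le n+1-l$), and the rest of your argument goes through unchanged.
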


According to the behavior of the protocol in Figure~\ref{alg:IS}, the local view of $i$ is defined as $view(i,\alpha)=view(i,\alpha_k)$, if $i\in\mathtt{Id}(\alpha_k)-\mathtt{Id}(\alpha_{k+1})$ and $view(i,\alpha)=view(i,\alpha_l)$ for $k=l$. Hence the view of $\alpha$ is defined as
$view(\alpha)=\{(i,view(i,\alpha)) \ : \ i\in[n]\}.$

\begin{lemma}\label{lem:E_l+1_E_l}
Let $\alpha=\alpha_0,\ldots,\alpha_{l+1}$ be an $IS$-execution, $l(\alpha)=l+2$. Then $view(\alpha)=view(\alpha')$ for some  $IS$-execution $\alpha'$ such that $l(\alpha')=l+1$.
\end{lemma}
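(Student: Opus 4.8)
The plan is to keep the first $l$ rounds $\alpha_0,\ldots,\alpha_{l-1}$ unchanged and to merge the last two rounds $\alpha_l,\alpha_{l+1}$ into a single $wr$-execution $\alpha'_l$ that, in one round, produces exactly the final views those two rounds produced. Set $J=\mathtt{Id}(\alpha_{l+1})$ and $K=\mathtt{Id}_l(\alpha_l)$. Since $\alpha$ is an $IS$-execution, $\mathtt{Id}(\alpha_{l+1})=\mathtt{Id}(\alpha_l)-\mathtt{Id}_l(\alpha_l)$, so $\mathtt{Id}(\alpha_l)$ is the disjoint union of $J$ and $K$, and $\alpha_{l+1}$ is a $wr$-execution on $J$. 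From the definition of $view(\cdot)$ for $IS$-executions, a process $i\in J$ takes its final view from the last round, $view(i,\alpha)=view(i,\alpha_{l+1})$, whereas a process $i\in K$ (that is, $i\in\mathtt{Id}(\alpha_l)-\mathtt{Id}(\alpha_{l+1})$) takes its final view from round $l$, $view(i,\alpha)=view(i,\alpha_l)$; the processes that terminated in some earlier round $k<l$ will be irrelevant to the change.

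The first thing I would record is a cardinality fact about $K$: if $i\in K=\mathtt{Id}_l(\alpha_l)$ then $|view(i,\alpha_l)|=n+1-l$ by definition, while $view(i,\alpha_l)\subseteq\mathtt{Id}(\alpha_l)$ and $|\mathtt{Id}(\alpha_l)|\le n+1-l$; hence $view(i,\alpha_l)=\mathtt{Id}(\alpha_l)$ for every $i\in K$. In other words, every process stripped off in round $l$ has already ``seen everyone present in that round'', so the processes of $K$ all share the single final view $\mathtt{Id}(\alpha_l)$. (This is exactly where the bound $|\mathtt{Id}(\alpha_l)|\le n+1-l$, itself a consequence of Proposition~\ref{prop:one_process_terminates}, gets used.)

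Next I would construct $\alpha'_l$ as the $wr$-execution on $\mathtt{Id}(\alpha_l)$ obtained by concatenating two blocks: first the linear order of $\alpha_{l+1}$ on $\mathcal{O}_J$, and then, placed entirely after it, all the writes $w_i$ with $i\in K$ followed by all the reads $r_i(j)$ with $i\in K$ and $j\in[n]$. This is a linear order on $\mathcal{O}_{\mathtt{Id}(\alpha_l)}$ in which each $w_i$ precedes each $r_i(j)$, so it is a genuine $wr$-execution, and its $\mathtt{Id}$ set is $\mathtt{Id}(\alpha_l)=\mathtt{Id}(\alpha_{l-1})-\mathtt{Id}_{l-1}(\alpha_{l-1})$; hence $\alpha':=\alpha_0,\ldots,\alpha_{l-1},\alpha'_l$ is again an $IS$-execution and $l(\alpha')=l+1$. (The corner cases are harmless: for $l=0$ the sequence is the single $wr$-execution $\alpha'_0$ on $[n]=\mathtt{Id}(\alpha_0)$, and if $K=\emptyset$ then $\alpha'_l$ is just $\alpha_{l+1}$.)

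It then remains to check $view(\alpha')=view(\alpha)$ one process at a time, which I expect to be the delicate part. A process terminating in some round $k<l$ lies in $\mathtt{Id}(\alpha_k)-\mathtt{Id}(\alpha_{k+1})$ in both executions, and those rounds are literally the same, so its view does not change. For $i\in J$: the reads of $i$ in $\alpha'_l$ all lie in the first block, while every $w_j$ with $j\in K$ lies in the second block and therefore comes after each $r_i(j)$; so $view(i,\alpha'_l)=\{j\in J \ : \ w_j\rightarrow_{\alpha_{l+1}}r_i(j)\}=view(i,\alpha_{l+1})=view(i,\alpha)$, and since $\alpha'_l$ is the last round of $\alpha'$ this equals $view(i,\alpha')$. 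For $i\in K$: in $\alpha'_l$ every write precedes every read of $i$, so $view(i,\alpha'_l)=\mathtt{Id}(\alpha_l)$, which by the cardinality fact equals $view(i,\alpha_l)=view(i,\alpha)$, and again this is $view(i,\alpha')$ because $\alpha'_l$ is the last round. This yields $view(\alpha')=view(\alpha)$. The only real obstacle is making sure, in this last step, that the ``late'' block of $\alpha'_l$ stays invisible to the processes of $J$ while forcing the processes of $K$ to read all of $\mathtt{Id}(\alpha_l)$; everything else reduces to checking the $wr$-execution axioms and unwinding the definition of $view(\cdot)$.
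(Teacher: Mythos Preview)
Your proof is correct and follows essentially the same approach as the paper: keep $\alpha_0,\ldots,\alpha_{l-1}$ and replace the last two rounds by a single $wr$-execution $\alpha'_l$ on $\mathtt{Id}(\alpha_l)$ whose views are $view(i,\alpha_l)$ for $i\in\mathtt{Id}_l(\alpha_l)$ and $view(i,\alpha_{l+1})$ otherwise. The paper merely specifies these target views for $\alpha'_l$, whereas you go further and exhibit an explicit linear order (the $\alpha_{l+1}$ block followed by the writes and then reads of $K$) realizing them, and you also make explicit the key fact $view(i,\alpha_l)=\mathtt{Id}(\alpha_l)$ for $i\in K$; both additions are correct and fill in details the paper leaves implicit.
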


The $wr$-execution 
$\alpha'=\alpha_0,\ldots,\alpha_{l-1},\alpha'_{l}$  of the lemma is obtained by,
$\alpha'_{l}$ such that
\begin{equation*}
view(i,\alpha'_l)=\left\{
  \begin{array}{ll}
    view(i,\alpha_{l}), & \hbox{if} \ i\in\mathtt{Id}_{l}(\alpha_l) \\
    view(i,\alpha_{l+1}), & \hbox{if} \ i\not\in\mathtt{Id}_l(\alpha_l).
  \end{array}
\right.
\end{equation*}

It  follows $\mathcal{E}_l=\{ view(\alpha) \ : \ \alpha=\alpha_0,\ldots,\alpha_l \}$. Thus, Lemma~\ref{lem:E_l+1_E_l} implies  $\mathcal{E}_{l+1}\subseteq\mathcal{E}_l$. For example consider the $IS$-execution $\alpha=\alpha_0,\alpha_1,\alpha_2$ where
\begin{flushleft}
$\alpha_0:w_0\rightarrow r_0(0)\rightarrow r_0(1)\rightarrow r_0(2)\rightarrow w_1 \rightarrow r_1(0)\rightarrow r_1(1)\rightarrow r_1(2)
\rightarrow w_2\rightarrow r_2(0)\rightarrow r_2(1)\rightarrow r_2(2)$.
\end{flushleft}
\vspace{-0.4cm}
\begin{flushleft}
$\alpha_1:w_0\rightarrow r_0(0)\rightarrow r_0(1)\rightarrow r_0(2)\rightarrow w_1 \rightarrow r_1(0)\rightarrow r_1(1)\rightarrow r_1(2)$.
\end{flushleft}
\vspace{-0.4cm}
\begin{flushleft}
$\alpha_2:w_0\rightarrow r_0(0)\rightarrow r_0(1)\rightarrow r_0(2)$.
\end{flushleft}

\noindent So $view(\alpha)=\{(0,\{0\}),(1,\{0,1\}),(2,\{0,1,2\})\}\in\mathcal{E}_2\subseteq\mathcal{E}_1\subseteq\mathcal{E}_0$, Fig.~\ref{fig:Protocol_Complex_1_2},~\ref{fig:complexes_algo}.

\subsection{Topological definitions}

The following are standard technical definitions, see~\cite{2008SimplicialComplexesGraphs_J,2013TopologyComplexView_K}. A (abstract) \emph{simplicial complex}
$\Delta$ on a finite set $V$ is a collection of subsets of $V$ such that for any $v\in V$, $\{v\}\in\Delta$, and if $\sigma\in\Delta$ and $\tau\subseteq\sigma$ then $\tau\in\Delta$. The elements of $V$ are called \emph{vertices} and the elements of $\Delta$ \emph{simplices.} The \emph{dimension} of a simplex $\sigma$ is $\dim(\sigma)=|\sigma|-1$.  For instance the vertices are $0$-simplices. For the purposes of this paper, we adopt the convention that the {\it void complex} $\Delta=\emptyset$ is a simplicial complex which is different from the {\it empty complex} $\Delta=\{\emptyset\}$. Given a positive integer $n$, $\Delta^{n}$ denotes the standard simplicial complex whose vertex set is $[n]$ and every subset of $[n]$ is a simplex. From now on we identify a complex $\Delta$ with its collection of subsets. For every simplex $\tau$ we denote by $I(\tau)$ the set of all simplices $\rho$, $\tau\subseteq\rho$. A simplex $\tau$ of $\Delta$ is called \emph{free} if there exists a maximal simplex $\sigma$ such that $\tau\subseteq\sigma$ and no other maximal simplex contains $\tau$. The procedure of removing every simplex of $I(\tau)$ from $\Delta$ is called a \emph{collapse.}

Let $\Delta$ and $\Gamma$ be simplicial complexes, \emph{$\Delta$ is collapsible to $\Gamma$} if there exists a sequence of collapses leading from $\Delta$ to $\Gamma$. The corresponding procedure is denoted by $\Delta\searrow\Gamma$. In particular, if the collapse is elementary with free simplex $\tau$, it is denoted by $\Delta\searrow_{\tau}\Gamma$. If $\Gamma$ is the void complex,  $\Delta$ is \emph{collapsible}. The next definition from~\cite{2013TopologyComplexView_K} gives a  procedure to collapse a simplicial complex, by collapsing simultaneously by entire orbits of the group action on the vertex set. Let $\Delta$ be a simplicial complex with a simplicial action of a finite group $G$. A simplex $\tau$ is called $G$-free if it is free and for all $g\in G$ such that  $g(\tau)\neq\tau$, $I(\tau)\cap I(g(\tau))=\emptyset$. If $\tau$ is $G$-free, the procedure of removing every simplex $\rho\in\bigcup\limits_{g\in G} I(g(\tau))$ is called a $G$-collapse of $\Delta$.

Since, if $\tau$ is $G$-free then $g(\tau)$ is free as well, the above definition guarantees that all collapses in the orbit of $\tau$ can be done in any order {\it i.e} every $G$-collapse is a collapse. A simplicial complex $\Delta$ is \emph{$G$-collapsible to $\Gamma$} if there exist a sequence of $G$-collapses leading from $\Delta$ to $\Gamma$, it is denoted by $\Delta\searrow_{G}\Gamma$. In similar way, if the $G$-collapse is elementary with $G$-free simplex $\tau$, the notation $\Delta\searrow_{G(\tau)}\Gamma$ will be used. In the case $\Gamma$ is the void complex, $\Delta$ is called $G$-collapsible. For instance consider a $2$-simplex $\sigma$, $\tau$ a $1$-face of $\sigma$ and the action of $\mathbb{Z}_3$ over $\sigma$, then $\tau$ is free but not $\mathbb{Z}_3$-free however $\sigma$ is $\mathbb{Z}_3$-collapsible.

\subsection{Protocol Complex}

Let $n$ be a positive integer. The abstract simplicial complex $\mathsf{WR}_l(\Delta^n)$ with $0\leq l\leq n$ consists of the set of vertices
$V=\{(i,view_i) \ : \ i\in view_i\subseteq [n]\}$. A subset $\sigma\subseteq V$ forms a simplex if only if there exist an $IS$-execution $\alpha$ of length $l+1$ such that $\sigma\subseteq view(\alpha)$.

The complex $\mathsf{WR}_0(\Delta^n)$ is called the \emph{protocol complex} of the write/read model and it will be denoted by $\mathsf{WR}(\Delta^n)$. Protocol complexes for the particular cases $n=1$ and $n=2$ are shown in Figure~\ref{fig:Protocol_Complex_1_2}. In \cite{2013TopologyComplexView_K}  a combinatorial description of the protocol complex $\mathrm{View}^n$ associated to the snapshot model is given.  There every simplex of $\mathrm{View}^n$ is represented as a $2\times t$ matrix. Every simplex $\sigma\in\mathsf{WR}(\Delta^n)$ can be expressed as
\begin{equation*}
\mathrm{W}(\sigma)=\left(
                     \begin{array}{cccc}
                       V_1 & \cdots & V_{t-1} & [n] \\
                       I_1 & \cdots & I_{t-1} & I_t \\
                     \end{array}
                   \right)
\end{equation*}
where $I_i\cap I_j=\emptyset$ with $i\neq j$ and $I_i\subseteq  V_j$ for all $i\leq j$. Moreover we can associate a matrix for every simplex in the complex $\mathsf{WR}_l(\Delta^n)$. This representation implies that $\chi(\Delta^n)$ and $\mathrm{View}^n$ are subcomplexes of $\mathsf{WR}(\Delta^n)$. Figure~\ref{fig:complexes_algo} shows the complex $\mathsf{WR}_l(\Delta^2)$. From now on we will write $\mathsf{WR}_l$ instead of $\mathsf{WR}_l(\Delta^^n)$ unless we specify the standard complex.
\begin{figure}[h]
\begin{center}
\includegraphics[scale=0.7]{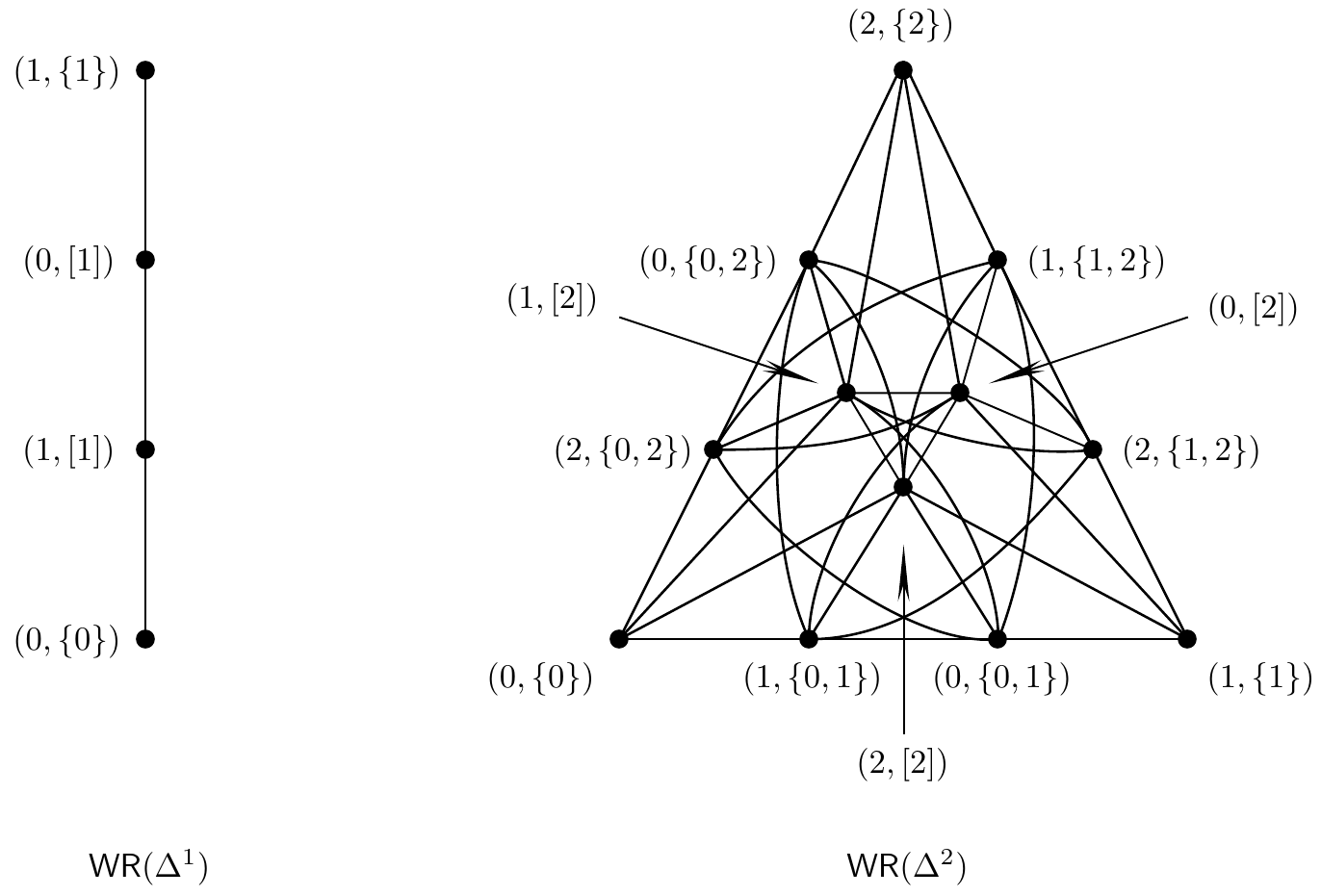}
\end{center}
\caption{Protocol Complex for $n=1$ and $n=2$.}
\label{fig:Protocol_Complex_1_2}
\end{figure}
Lemma~\ref{lem:E_l+1_E_l} implies that every maximal simplex of $\mathsf{WR}_{l+1}$ is a simplex of $\mathsf{WR}_l$, which implies that
$\mathsf{WR}_{l+1}$ is a subcomplex of $\mathsf{WR}_{l}$. From now on $\sigma$ will denote a simplex of $\mathsf{WR}_l$. For $0\leq k\leq l$ let
$\sigma_k^{<}=\{(i,view_i)\in\sigma \ : \ |view_i|<n+1-k\}.$ In a similar way  $\sigma_k^{=}$ and $\sigma_k=\sigma_k^{<}\cup\sigma_k^{=}$
are defined. Therefore, the set of processes in $\sigma$ which participate in the $l+1$ call recursive of algorithm \ref{alg:IS} is partitioned in those which read $n+1-l$ processes and those which read less than $n+1-l$ processes in the $l+1$ layer shared memory.
Let us define
$\mathtt{I}_{\sigma}^<=\bigcup\limits_{i\in\mathtt{Id}(\sigma_l^<)}view_i$  and  $\mathtt{I}_{\sigma}=\bigcup\limits_{i\in\mathtt{Id}(\sigma_l)}view_i$.

\begin{theorem}\label{theo:not_in_WR_l+1}
$\sigma\in\mathsf{WR}_{l+1}$ if only if $\mathtt{I}_{\sigma}^<\cap\mathtt{Id}(\sigma_l^=)=\emptyset$ and $|\mathtt{I}_{\sigma}^<|<n+1-l$.
\end{theorem}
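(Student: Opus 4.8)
The plan is to characterize membership in $\mathsf{WR}_{l+1}$ via the recursive structure of $IS$-executions, using Lemma~\ref{lem:E_l+1_E_l} and Lemma~\ref{lem:equal_interval} as the main tools. Fix $\sigma\in\mathsf{WR}_l$, so $\sigma\subseteq view(\alpha)$ for some $IS$-execution $\alpha=\alpha_0,\ldots,\alpha_l$. By definition, $\sigma\in\mathsf{WR}_{l+1}$ means there is an $IS$-execution $\beta=\beta_0,\ldots,\beta_{l+1}$ of length $l+2$ with $\sigma\subseteq view(\beta)$. The first step is to translate this existential condition into a condition on the sets $\mathtt{Id}(\sigma_l^<)$, $\mathtt{Id}(\sigma_l^=)$ and the associated unions $\mathtt{I}_\sigma^<$. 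The key observation is that in any such $\beta$, the processes of $\sigma$ with $|view_i|=n+1-l$ must be exactly the processes that \emph{stop} at layer $l$ (i.e.\ in $\mathtt{Id}_l(\beta_l)$), while those with $|view_i|<n+1-l$ must proceed to layer $l+1$ — this is forced because a view of size $<n+1-l$ cannot be produced at layer $l$ in a length-$(l+2)$ execution (the process did not see enough id's, so by line~(2) of Algorithm~IS it recurses), and a view of size $n+1-l$ is terminal at that layer.

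Next I would argue the two directions. For the forward direction, suppose $\sigma\in\mathsf{WR}_{l+1}$ witnessed by $\beta$. The processes in $\mathtt{Id}(\sigma_l^<)$ all take a step in $\beta_{l+1}$, which is a $wr$-execution on $\mathtt{Id}(\beta_l)-\mathtt{Id}_l(\beta_l)$; hence their views at layer $l+1$ are subsets of $\mathtt{Id}(\beta_{l+1})$, and so $\mathtt{I}_\sigma^< = \bigcup_{i\in\mathtt{Id}(\sigma_l^<)} view_i \subseteq \mathtt{Id}(\beta_{l+1})$. Since $|\mathtt{Id}(\beta_{l+1})|\le n+1-(l+1) = n-l < n+1-l$, we get $|\mathtt{I}_\sigma^<| < n+1-l$. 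Moreover the processes in $\mathtt{Id}(\sigma_l^=)$ are in $\mathtt{Id}_l(\beta_l)$, so they have \emph{stopped} and do not appear in $\mathtt{Id}(\beta_{l+1})$; combined with $\mathtt{I}_\sigma^<\subseteq\mathtt{Id}(\beta_{l+1})$ this yields $\mathtt{I}_\sigma^<\cap\mathtt{Id}(\sigma_l^=)=\emptyset$. For the converse, assume the two combinatorial conditions hold. I would build $\beta$ explicitly: keep $\beta_k=\alpha_k$ for $k<l$; at layer $l$ let the processes of $\mathtt{Id}(\sigma_l^=)$ (and whatever other processes of $\alpha_l$ already have full-size views there) stop, and let the remaining processes — including all of $\mathtt{Id}(\sigma_l^<)$ — continue into a new layer $\beta_{l+1}$, which is a $wr$-execution on a set containing $\mathtt{I}_\sigma^<$. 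The condition $|\mathtt{I}_\sigma^<|<n+1-l$ guarantees there is "room" to realize each required view $view_i$ for $i\in\mathtt{Id}(\sigma_l^<)$ as a local view in $\beta_{l+1}$ (one can order the writes so that process $i$ reads exactly $view_i$), and the disjointness condition guarantees this reassignment does not disturb the views of the stopped processes in $\mathtt{Id}(\sigma_l^=)$ or the already-determined views at earlier layers. Then $\sigma\subseteq view(\beta)$ and $l(\beta)=l+2$, so $\sigma\in\mathsf{WR}_{l+1}$.

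The main obstacle I expect is the converse construction: one must verify that the prescribed views $\{view_i : i\in\mathtt{Id}(\sigma_l^<)\}$ are \emph{jointly realizable} as the local views of a single $wr$-execution $\beta_{l+1}$ on an appropriate id-set, and simultaneously that $\beta$ remains a legitimate $IS$-execution (in particular that $\mathtt{Id}_k$ behaves consistently and that the processes we declared "stopped" at layer $l$ genuinely have views of size $n+1-l$ there, using Proposition~\ref{prop:one_process_terminates} to ensure at least one process terminates each layer). The matrix representation $\mathrm{W}(\sigma)$ of simplices of $\mathsf{WR}_l$ described above should make this bookkeeping manageable: the conditions $I_i\cap I_j=\emptyset$ and $I_i\subseteq V_j$ for $i\le j$ are exactly what is needed to place the continuing processes into a consistent new layer. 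I would also use Lemma~\ref{lem:equal_interval}(2) to replace the original $\alpha$ by one agreeing with $\sigma$ layer-by-layer, so that the surgery at layers $l$ and $l+1$ is the only thing that changes. The rest is careful but routine verification that $view(\beta)$ contains $\sigma$.
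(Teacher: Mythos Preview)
Your proposal is correct and follows essentially the same route as the paper's own proof: the forward direction is argued exactly as the paper does (processes in $\mathtt{Id}(\sigma_l^<)$ must take a step in $\beta_{l+1}$, hence $\mathtt{I}_\sigma^<\subseteq\mathtt{Id}(\beta_{l+1})$, which has size at most $n-l$ and is disjoint from the set of processes that stopped at layer $l$), and the converse is handled by explicitly building a length-$(l+2)$ $IS$-execution, which is all the paper says as well (``we can build an $IS$-execution $\alpha=\alpha_0,\ldots,\alpha_{l+1}$ such that $\sigma\subseteq view(\alpha)$'').

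One small caution on your converse construction: keeping $\beta_l=\alpha_l$ verbatim is not quite enough, because a process $j\in\mathtt{I}_\sigma^<\setminus\mathtt{Id}(\sigma_l^<)$ may happen to have a full-size view in $\alpha_l$ and would then stop there, preventing it from being seen at layer $l+1$. You need to redefine $\beta_l$ so that every process in $\mathtt{I}_\sigma^<$ writes and reads before anyone outside $\mathtt{I}_\sigma^<$ writes; the hypotheses $\mathtt{I}_\sigma^<\cap\mathtt{Id}(\sigma_l^=)=\emptyset$ and $|\mathtt{I}_\sigma^<|<n+1-l$ are exactly what make this reordering compatible with the prescribed views of $\sigma_l^=$. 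You already flag this as the main obstacle, and the fix is straightforward, so this is a refinement rather than a gap.
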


\begin{proof}
Suppose $\mathtt{I}_{\sigma}^<\cap\mathtt{Id}(\sigma_l^=)\neq\emptyset$ or $|\mathtt{I}_{\sigma}^<|=n+1-l$. Then there exists a $IS$-execution $\alpha=\alpha_0,\ldots,\alpha_{l+1}$ such that $\sigma_l^<\subseteq view(\alpha_{l+1})$. In addition there exist processes $i$ and $k$ such that $|view_i|<n+1-l$, $|view_k|=n+1-l$ and $k\in view_i$. This implies that $k$ wrote in the $l+1$ shared memory,  a contradiction. For the other direction, since $\mathtt{I}_{\sigma}^<\cap\mathtt{Id}(\sigma_l^{=})=\emptyset$ and $|\mathtt{I}_{\sigma}^<|<n+1-l$ we can build an $IS$-execution $\alpha=\alpha_0,\ldots,\alpha_{l+1}$ such that $\sigma\subseteq view(\alpha)$.
\qed
\end{proof}

Notice that $\mathtt{I}_{\sigma}$ represents the set of processes which have been read in the $l+1$  recursive call of the algorithm in Figure~\ref{alg:IS}.

\begin{corollary}\label{cor:properties_WR_l}
If $\sigma\not\in\mathsf{WR}_{l+1}$ then
\begin{multicols}{2}
\begin{enumerate}
\item $|\mathtt{I}_{\sigma}|=n+1-l$.
\item $\mathtt{I}_{\sigma}=\mathtt{I}_{\tau}$ for all $\sigma\subseteq\tau$.
\end{enumerate}
\end{multicols}
\end{corollary}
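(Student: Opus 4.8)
The plan is to deduce both parts directly from Theorem~\ref{theo:not_in_WR_l+1}, using only one extra fact already recorded in the excerpt, namely that an $IS$-execution $\alpha=\alpha_0,\ldots,\alpha_l$ satisfies $|\mathtt{Id}(\alpha_l)|\le n+1-l$. So I would begin by fixing such an execution with $\sigma\subseteq view(\alpha)$. The key preliminary observation is that every vertex $(i,view_i)\in\sigma_l$ records a last-round view: if process $i$ had terminated in some round $k<l$, then $|view_i|=n+1-k>n+1-l$, contradicting $|view_i|\le n+1-l$; hence $i\in\mathtt{Id}(\alpha_l)$ and $view_i=view(i,\alpha_l)$. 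Since in $\alpha_l$ only the processes of $\mathtt{Id}(\alpha_l)$ write, $view(i,\alpha_l)\subseteq\mathtt{Id}(\alpha_l)$, and therefore $\mathtt{I}_{\sigma}^{<}\subseteq\mathtt{I}_{\sigma}\subseteq\mathtt{Id}(\alpha_l)$; in particular $|\mathtt{I}_{\sigma}|\le n+1-l$.

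For part (1) it then remains to get the reverse inequality, and this is precisely what the failure of the two conditions in Theorem~\ref{theo:not_in_WR_l+1} supplies. If $|\mathtt{I}_{\sigma}^{<}|=n+1-l$, then $n+1-l=|\mathtt{I}_{\sigma}^{<}|\le|\mathtt{I}_{\sigma}|\le n+1-l$, so equality holds. Otherwise there is $k\in\mathtt{I}_{\sigma}^{<}\cap\mathtt{Id}(\sigma_l^{=})$; then $(k,view_k)\in\sigma$ with $|view_k|=n+1-l$, and since $view_k\subseteq\mathtt{Id}(\alpha_l)$ while $|\mathtt{Id}(\alpha_l)|\le n+1-l$ we must have $view_k=\mathtt{Id}(\alpha_l)$, whence $\mathtt{I}_{\sigma}=\mathtt{Id}(\alpha_l)$ and again $|\mathtt{I}_{\sigma}|=n+1-l$.

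For part (2), recall that $\mathsf{WR}_{l+1}$ is a subcomplex of $\mathsf{WR}_l$; hence $\sigma\notin\mathsf{WR}_{l+1}$ together with $\sigma\subseteq\tau$ forces $\tau\notin\mathsf{WR}_{l+1}$, so part (1) applies to $\tau$ as well and $|\mathtt{I}_{\sigma}|=|\mathtt{I}_{\tau}|=n+1-l$. On the other hand $\sigma\subseteq\tau$ immediately yields $\sigma_l\subseteq\tau_l$, since the constraint $|view_i|\le n+1-l$ depends only on the vertex; hence $\mathtt{Id}(\sigma_l)\subseteq\mathtt{Id}(\tau_l)$ and therefore $\mathtt{I}_{\sigma}\subseteq\mathtt{I}_{\tau}$. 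An inclusion of finite sets of equal cardinality is an equality, so $\mathtt{I}_{\sigma}=\mathtt{I}_{\tau}$.

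The only delicate point, and the step I expect to need the most care, is the preliminary observation: one must be sure that a vertex of $\sigma$ whose view has size at most $n+1-l$ genuinely carries the round-$l$ view $view(i,\alpha_l)$ and that this view lies inside $\mathtt{Id}(\alpha_l)$. This relies on the precise definition of $view(i,\alpha)$ for $IS$-executions and on the monotonicity $\mathtt{Id}(\alpha_{k+1})\subseteq\mathtt{Id}(\alpha_k)$. Once that is in place, both parts are short counting arguments resting on the bound $|\mathtt{Id}(\alpha_l)|\le n+1-l$ and on the subcomplex relation $\mathsf{WR}_{l+1}\subseteq\mathsf{WR}_l$.
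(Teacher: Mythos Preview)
The paper states this corollary without proof, treating it as an immediate consequence of Theorem~\ref{theo:not_in_WR_l+1}; your argument correctly supplies the missing details and follows exactly the route one would expect, combining the characterization of $\sigma\notin\mathsf{WR}_{l+1}$ with the bound $|\mathtt{Id}(\alpha_l)|\le n+1-l$ and the subcomplex relation $\mathsf{WR}_{l+1}\subseteq\mathsf{WR}_l$. The preliminary observation you flag as delicate is indeed the heart of the matter and your justification (a process terminating in round $k<l$ would have $|view_i|=n+1-k>n+1-l$) is the right one.
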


Let $inv(\sigma)=\{(k,\mathtt{I}_{\sigma}) \ : \ k\in\mathtt{I}_{\sigma}\backslash\mathtt{I}_{\sigma}^<\}$ if $\mathtt{I}_{\sigma}^<\neq\mathtt{I}_{\sigma}$ else $inv(\sigma)=\{(k,\mathtt{I}_{\sigma}) \ : \ k\in\mathtt{I}_{\sigma}\backslash\mathtt{Id}(\sigma_l^<)\}$. Notice that if $\sigma\not\in\mathsf{WR}_{l+1}$ then $inv(\sigma)\neq\emptyset$.

For the simplices $\sigma^{-}=\sigma-inv(\sigma)$ and $\sigma^{+}=\sigma\cup inv(\sigma)$.

\begin{proposition}\label{prop:properties_simplices_aux}
If $\sigma\not\in\mathsf{WR}_{l+1}$ then
\begin{enumerate}
\item $\sigma^+=\sigma^-\cup inv(\sigma)$.
\item $\sigma^-\subseteq\sigma\subseteq\sigma^+$.
\item $\sigma^-\not\in\mathsf{WR}_{l+1}$.
\item If $\sigma^-\subseteq\tau\subseteq\sigma^+$ then $\sigma^-=\tau^-$ and $\sigma^+=\tau^+$.
\item $(\sigma^-)^-=\sigma^-$.
\end{enumerate}
\end{proposition}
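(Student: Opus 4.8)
The plan is to dispose of (1) and (2) by pure set theory and then extract from Corollary~\ref{cor:properties_WR_l} a single structural fact that turns (3)--(5) into bookkeeping. For (1): since $\sigma^{-}=\sigma\setminus inv(\sigma)$ we have $\sigma^{-}\cap inv(\sigma)=\emptyset$, hence $\sigma^{-}\cup inv(\sigma)=\sigma\cup inv(\sigma)=\sigma^{+}$. For (2): $\sigma^{-}\subseteq\sigma\subseteq\sigma^{+}$ is immediate. (That $\sigma^{-}$ and $\sigma^{+}$ are genuine simplices of $\mathsf{WR}_l$ is used throughout: $\sigma^{-}$ is a face of $\sigma$, and $\sigma^{+}$ is realized by the length-$(l+1)$ $IS$-execution witnessing $\sigma$ after letting each process of $\mathtt{I}_{\sigma}\setminus\mathtt{I}_{\sigma}^{<}$ read all of $\mathtt{I}_{\sigma}$ in its last round, which is possible since $\mathtt{I}_{\sigma}$ is exactly the $\mathtt{Id}$-set of that round.)

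The key observation I would isolate is the following. Each vertex of $inv(\sigma)$ has the form $(k,\mathtt{I}_{\sigma})$, and $|\mathtt{I}_{\sigma}|=n+1-l$ by Corollary~\ref{cor:properties_WR_l}(1). Since every $(i,view_i)\in\sigma_l$ has $view_i\subseteq\mathtt{I}_{\sigma}$, a vertex of $\sigma_l^{=}$ has view exactly $\mathtt{I}_{\sigma}$, and hence a vertex of $inv(\sigma)$ that happens to lie in $\sigma$ must lie in $\sigma_l^{=}$. Therefore $inv(\sigma)\cap\sigma_l^{<}=\emptyset$, and since $inv(\sigma)$ contains no vertex of view-size $<n+1-l$, both $\sigma^{-}$ and $\sigma^{+}$ have the same ``small part'' as $\sigma$, i.e. $(\sigma^{-})_l^{<}=(\sigma^{+})_l^{<}=\sigma_l^{<}$. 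By monotonicity of $(\cdot)_l^{<}$ this yields $\tau_l^{<}=\sigma_l^{<}$, $\mathtt{Id}(\tau_l^{<})=\mathtt{Id}(\sigma_l^{<})$ and $\mathtt{I}_{\tau}^{<}=\mathtt{I}_{\sigma}^{<}$ for every $\tau$ with $\sigma^{-}\subseteq\tau\subseteq\sigma^{+}$.

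For (3) I would show $\sigma^{-}$ still violates the criterion of Theorem~\ref{theo:not_in_WR_l+1}. If $|\mathtt{I}_{\sigma}^{<}|=n+1-l$ then $|\mathtt{I}_{\sigma^{-}}^{<}|=|\mathtt{I}_{\sigma}^{<}|=n+1-l$ and we are done. Otherwise $|\mathtt{I}_{\sigma}^{<}|<n+1-l$, so $\sigma\notin\mathsf{WR}_{l+1}$ forces some $k\in\mathtt{I}_{\sigma}^{<}\cap\mathtt{Id}(\sigma_l^{=})$; then $(k,\mathtt{I}_{\sigma})\in\sigma$, and because $\mathtt{I}_{\sigma}^{<}\neq\mathtt{I}_{\sigma}$ the first branch of the definition of $inv$ applies, so $(k,\mathtt{I}_{\sigma})\notin inv(\sigma)$, whence $(k,\mathtt{I}_{\sigma})\in\sigma^{-}$ and $k\in\mathtt{I}_{\sigma^{-}}^{<}\cap\mathtt{Id}((\sigma^{-})_l^{=})$, again contradicting membership in $\mathsf{WR}_{l+1}$. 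So $\sigma^{-}\notin\mathsf{WR}_{l+1}$.

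For (4), fix $\tau$ with $\sigma^{-}\subseteq\tau\subseteq\sigma^{+}$. Since $\mathsf{WR}_{l+1}$ is closed under faces and $\sigma^{-}\notin\mathsf{WR}_{l+1}$ by (3), also $\tau\notin\mathsf{WR}_{l+1}$, so $inv(\tau),\tau^{-},\tau^{+}$ are defined. Corollary~\ref{cor:properties_WR_l}(2) applied to $\sigma^{-}$ gives $\mathtt{I}_{\tau}=\mathtt{I}_{\sigma^{-}}=\mathtt{I}_{\sigma}$, and the observation gives $\mathtt{I}_{\tau}^{<}=\mathtt{I}_{\sigma}^{<}$ and $\mathtt{Id}(\tau_l^{<})=\mathtt{Id}(\sigma_l^{<})$; since $inv$ is determined by exactly these three data, $inv(\tau)=inv(\sigma)$. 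Then $\tau^{-}=\tau\setminus inv(\sigma)$ and $\tau^{+}=\tau\cup inv(\sigma)$, and the sandwich $\sigma^{-}\subseteq\tau\subseteq\sigma^{+}$ together with (1) yields $\tau^{-}=\sigma^{-}$ and $\tau^{+}=\sigma^{+}$ by a one-line computation. Item (5) is then (4) specialised to $\tau=\sigma^{-}$, which is legitimate by (2). I expect the only real obstacle to be the observation of the second paragraph --- that $inv(\sigma)$ meets $\sigma$ only inside $\sigma_l^{=}$, and that $\mathtt{I}_{\bullet}$, $\mathtt{I}_{\bullet}^{<}$, $\mathtt{Id}(\bullet_l^{<})$ are constant on the whole interval $[\sigma^{-},\sigma^{+}]$ --- after which (3)--(5) are routine applications of Theorem~\ref{theo:not_in_WR_l+1} and Corollary~\ref{cor:properties_WR_l} together with set manipulation.
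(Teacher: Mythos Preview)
Your argument is correct and follows the same route as the paper's proof: both hinge on the observation that $inv(\sigma)$ lives entirely in the ``$=$'' part so that $(\cdot)_l^{<}$ and hence $\mathtt{I}_{\bullet}^{<}$ are unchanged across the interval $[\sigma^{-},\sigma^{+}]$, and that $\mathtt{I}_{\bullet}$ is constant there (the paper asserts this directly, you obtain it via Corollary~\ref{cor:properties_WR_l}(2) after establishing (3)), whence $inv$ is constant on the interval and (4)--(5) follow by set manipulation. Your case analysis for (3) is in fact more explicit than the paper's one-line appeal to Theorem~\ref{theo:not_in_WR_l+1}, and you derive (5) as a specialisation of (4) rather than re-arguing it, but these are stylistic rather than substantive differences.
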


Consider $\mathtt{I}_-^+(\sigma)=\{\tau\in\mathsf{WR}_l \ : \ \sigma^-\subseteq\tau\subseteq\sigma^+\}$. Item $(3)$ above implies that $\mathtt{I}_-^+(\sigma)\cap\mathsf{WR}_{l+1}=\emptyset$ if $\sigma\not\in\mathsf{WR}_{l+1}$. Moreover from $(4)$ it is obtained that $\mathtt{I}_-^+(\sigma)\cap\mathtt{I}_-^+(\tau)=\emptyset$ or $\mathtt{I}_-^+(\sigma)=\mathtt{I}_-^+(\tau)$.

\begin{figure}
\begin{center}
\includegraphics[scale=0.5]{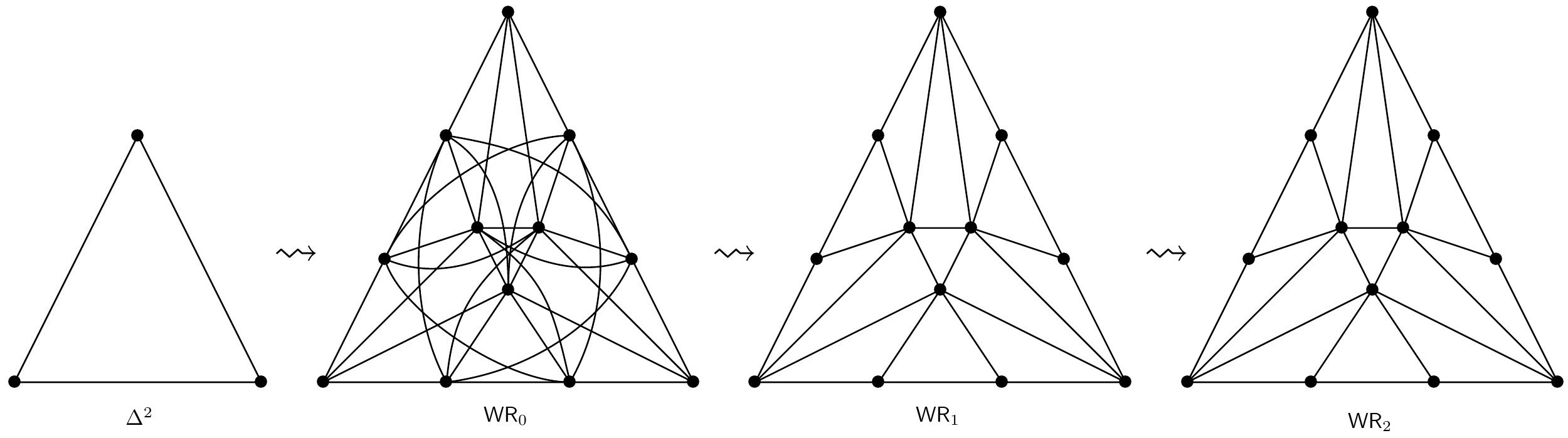}
\end{center}
\caption{Complexes $\mathsf{WR}_l$.}
\label{fig:complexes_algo}
\end{figure}

\vspace{-0.5cm}

\section{Collapsibility}
\label{sec:Collapsibility}

Let $S_{[n]}$ denote the permutation group of $[n]$. Notice that if the Id's of processes in a $wr$-execution on $I$ are permuted according to $\pi\in S_{[n]}$ then we obtain a new linear order on $\pi(I)$. In other words if $\alpha$ is a $wr$-execution on $I$ and $\pi\in S_{[n]}$ then $\alpha'=\pi(\alpha)$ is a $wr$-execution on $\pi(I)$. Moreover if $\sigma=view(\alpha)$ then $\pi(\sigma)=view(\pi(\alpha))$. This shows that there exists a natural group action on each simplicial complex $\mathsf{WR}_l$.

\begin{proposition}\label{prop:properties_group_action}
Let $\sigma\in\mathsf{WR}_l$ be a simplex. Then
\begin{multicols}{2}
\begin{enumerate}
\item $\pi(\sigma)\in\mathsf{WR}_l$.
\item $\pi(\sigma^-)=\pi(\sigma)^-$.
\item $\pi(\sigma^+)=\pi(\sigma)^+$.
\item $\pi(\mathtt{I}_-^+(\sigma))=\mathtt{I}_-^+(\pi(\sigma))$.
\end{enumerate}
\end{multicols}
\end{proposition}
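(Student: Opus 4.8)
The plan is to prove all four items by unwinding the definitions and repeatedly invoking the fact that the action of $\pi\in S_{[n]}$ commutes with the construction of views from $wr$-executions. For item (1), I would argue that since $\sigma\in\mathsf{WR}_l$, there is an $IS$-execution $\alpha=\alpha_0,\ldots,\alpha_l$ with $\sigma\subseteq view(\alpha)$; applying $\pi$ componentwise to every operation in each $\alpha_k$ yields an $IS$-execution $\pi(\alpha)$ (one must check that the defining conditions of an $IS$-execution—$\alpha_{k+1}$ being a $wr$-execution on $\mathtt{Id}(\alpha_k)-\mathtt{Id}_k(\alpha_k)$—are preserved, which is immediate since $\pi$ bijects $\mathtt{Id}$ sets and preserves the cardinality of views), and $\pi(\sigma)\subseteq view(\pi(\alpha))$, so $\pi(\sigma)\in\mathsf{WR}_l$.

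For items (2) and (3) the key is to show $\pi$ commutes with each ingredient in the definitions of $\sigma^-$ and $\sigma^+$: namely the partition $\sigma_l^<,\sigma_l^=$, the sets $\mathtt{Id}(\sigma_l^<),\mathtt{Id}(\sigma_l)$, the sets $\mathtt{I}_\sigma^<,\mathtt{I}_\sigma$, and finally $inv(\sigma)$. Since $\pi$ preserves $|view_i|$ (it merely relabels indices inside each $view_i$ and the owner $i$), we get $\pi(\sigma_k^<)=\pi(\sigma)_k^<$ and likewise for $\sigma_k^=$, hence $\mathtt{Id}(\pi(\sigma)_l^<)=\pi(\mathtt{Id}(\sigma_l^<))$ and $\mathtt{Id}(\pi(\sigma)_l)=\pi(\mathtt{Id}(\sigma_l))$. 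Because $view_{\pi(i)}$ in $\pi(\sigma)$ equals $\pi(view_i)$, taking the union over the relevant $\mathtt{Id}$ set gives $\mathtt{I}_{\pi(\sigma)}^<=\pi(\mathtt{I}_\sigma^<)$ and $\mathtt{I}_{\pi(\sigma)}=\pi(\mathtt{I}_\sigma)$. Then, distinguishing the two cases in the definition of $inv$ (whether $\mathtt{I}_\sigma^<=\mathtt{I}_\sigma$ or not—a condition itself preserved by $\pi$), a set-difference computation gives $inv(\pi(\sigma))=\pi(inv(\sigma))$. Items (2) and (3) follow since $\pi$ commutes with set difference and union: $\pi(\sigma^-)=\pi(\sigma\setminus inv(\sigma))=\pi(\sigma)\setminus\pi(inv(\sigma))=\pi(\sigma)\setminus inv(\pi(\sigma))=\pi(\sigma)^-$, and similarly for $\sigma^+$.

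Item (4) is then formal: for $\tau\in\mathsf{WR}_l$ with $\sigma^-\subseteq\tau\subseteq\sigma^+$, applying $\pi$ (a bijection that by item (1) preserves membership in $\mathsf{WR}_l$) gives $\pi(\sigma)^-=\pi(\sigma^-)\subseteq\pi(\tau)\subseteq\pi(\sigma^+)=\pi(\sigma)^+$, so $\pi(\tau)\in\mathtt{I}_-^+(\pi(\sigma))$; conversely every element of $\mathtt{I}_-^+(\pi(\sigma))$ is $\pi$ of such a $\tau$ by applying $\pi^{-1}$. Hence $\pi(\mathtt{I}_-^+(\sigma))=\mathtt{I}_-^+(\pi(\sigma))$. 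I expect the only mildly delicate step to be the bookkeeping in item (2)/(3) for $inv$, specifically verifying that the guard $\mathtt{I}_\sigma^<\neq\mathtt{I}_\sigma$ selecting which branch of the definition applies is invariant under $\pi$ and that in the second branch $\mathtt{I}_\sigma\setminus\mathtt{Id}(\sigma_l^<)$ transforms correctly; everything else is a routine "$\pi$ commutes with everything" argument.
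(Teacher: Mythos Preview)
Your proposal is correct and follows essentially the same approach as the paper: the paper's proof simply states that (1) is clear, that (2) and (3) follow from the equalities $\pi(\sigma_l^<)=\pi(\sigma)_l^<$ and $\pi(\sigma_l^=)=\pi(\sigma)_l^=$ (which is exactly the key observation you spell out in detail), and that (4) follows from Proposition~\ref{prop:properties_simplices_aux}(4). The only minor difference is that for item (4) you argue directly from bijectivity of $\pi$ together with (2) and (3), whereas the paper routes through Proposition~\ref{prop:properties_simplices_aux}(4); your direct argument is equally valid and arguably more transparent.
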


For example in Fig.~\ref{fig:Protocol_Complex_1_2},  $\sigma=\{(1,\{1,2\}),(2,\{0,2\}),(0,[2])\}$ and  $\pi(0)=1$, $\pi(1)=2$ and $\pi(2)=0$, then $\pi(\sigma)=\{(2,\{0,2\}),(0,\{0,1\}),(1,[2])\}$.

\begin{theorem}\label{theo:Collapsability}
For every $0\leq l\leq n+1$,
\begin{enumerate}
\item $\mathsf{WR}_l$ is collapsible to $\mathsf{WR}_{l+1}$.
\item $\mathsf{WR}_l$ is $S_{[n]}$-collapsible to $\mathsf{WR}_{l+1}$.
\end{enumerate}
\end{theorem}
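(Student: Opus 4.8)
The plan is to prove both parts of Theorem~\ref{theo:Collapsability} simultaneously, by showing that the collapses witnessing part~(1) can be organized into $S_{[n]}$-collapses, which then gives part~(2); part~(2) trivially implies part~(1), but it is cleaner to exhibit the explicit elementary collapses first. The starting observation is that $\mathsf{WR}_{l+1}$ is obtained from $\mathsf{WR}_l$ by deleting exactly the simplices not in $\mathsf{WR}_{l+1}$, and by Proposition~\ref{prop:properties_simplices_aux}~(3) together with the remark following it, these simplices are partitioned into the intervals $\mathtt{I}_-^+(\sigma)=\{\tau : \sigma^-\subseteq\tau\subseteq\sigma^+\}$ as $\sigma$ ranges over simplices not in $\mathsf{WR}_{l+1}$ (distinct intervals are either equal or disjoint, and each is disjoint from $\mathsf{WR}_{l+1}$). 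So the first step is: \emph{for a single interval $\mathtt{I}_-^+(\sigma)$, removing all its simplices from the current complex is an elementary collapse with free face $\sigma^-$ and the collapse is through the "interval" $[\sigma^-,\sigma^+]$.} Concretely I would show that $\sigma^-$ is a free face of $\mathsf{WR}_l$ minus the previously-removed intervals, with unique maximal coface $\sigma^+$; this amounts to checking that any simplex $\tau$ of $\mathsf{WR}_l$ with $\sigma^-\subseteq\tau$ and $\tau\notin\mathtt{I}_-^+(\sigma)$ must actually fail $\tau\subseteq\sigma^+$, and conversely that $\sigma^+\in\mathsf{WR}_l$. The content here is exactly the "one added vertex = one recursive step of the snapshot algorithm" intuition: $inv(\sigma)$ records the processes whose last-round view can be enlarged to $\mathtt{I}_\sigma$, and adding them corresponds to moving from a non-$(l+1)$-round execution to one where those processes do an extra round. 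I would lean on Corollary~\ref{cor:properties_WR_l} (the value $\mathtt{I}_\sigma=n+1-l$ and its invariance under $\subseteq$) and Theorem~\ref{theo:not_in_WR_l+1} to pin down which cofaces of $\sigma^-$ lie in the interval.

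The second step is to order the intervals so that the successive removals are legitimate collapses, i.e. at the moment we remove $\mathtt{I}_-^+(\sigma)$, the face $\sigma^-$ is still free in the current subcomplex. The natural order is by increasing dimension of $\sigma^-$ (equivalently by $|\sigma^-|$), or more robustly by reverse inclusion among the $\sigma^+$'s: removing an interval can only delete cofaces of simplices $\tau$ with $\tau\subseteq\sigma^+$, and since distinct intervals are disjoint and every simplex of an interval has $\sigma^-$ as a face, no interval removal can destroy the freeness of $\sigma'^-$ for a not-yet-processed interval $\mathtt{I}_-^+(\sigma')$ as long as $\sigma'^-$ is not contained in any already-removed $\sigma^+$; Proposition~\ref{prop:properties_simplices_aux}~(4)--(5) guarantee that if $\sigma'^-\subseteq\sigma^+$ then $\sigma'$ and $\sigma$ generate the same interval, so this never happens across distinct intervals. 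Hence \emph{any} order works, and in particular we may process whole $S_{[n]}$-orbits of intervals at once.

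The third step upgrades this to an $S_{[n]}$-collapse. By Proposition~\ref{prop:properties_group_action}, $\pi(\sigma^-)=\pi(\sigma)^-$, $\pi(\sigma^+)=\pi(\sigma)^+$, and $\pi(\mathtt{I}_-^+(\sigma))=\mathtt{I}_-^+(\pi(\sigma))$, so the set of intervals is permuted by $S_{[n]}$ and the free face $\sigma^-$ is sent to the free face of the image interval. To invoke the $G$-collapse machinery from the Preliminaries I must check that each $\sigma^-$ occurring is $S_{[n]}$-free: it is free by Step~1, and for $\pi$ with $\pi(\sigma^-)\neq\sigma^-$ I need $I(\sigma^-)\cap I(\pi(\sigma^-))=\emptyset$ within the current complex. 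Since after Step~1 the only cofaces of $\sigma^-$ that survive in $\mathtt{I}_-^+(\sigma)$-land are those in $[\sigma^-,\sigma^+]$, and $\pi(\sigma^-)\neq\sigma^-$ forces the intervals $[\sigma^-,\sigma^+]$ and $[\pi(\sigma^-),\pi(\sigma^+)]$ to be distinct hence disjoint, a common coface would have to lie outside both intervals while containing both $\sigma^-$ and $\pi(\sigma^-)$ — and I claim the combinatorial description (via the matrix form $\mathrm{W}(\cdot)$ and Theorem~\ref{theo:not_in_WR_l+1}) rules this out because such a coface would still not be in $\mathsf{WR}_{l+1}$, hence would lie in \emph{some} interval, which must then coincide with both, a contradiction. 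Thus each orbit of intervals can be removed as a single $G$-collapse, and iterating over orbits gives $\mathsf{WR}_l\searrow_{S_{[n]}}\mathsf{WR}_{l+1}$.

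The main obstacle I anticipate is Step~1, specifically the verification that $\sigma^+=\sigma\cup inv(\sigma)$ is a simplex of $\mathsf{WR}_l$ and that $\sigma^-$ has \emph{no} maximal coface other than $\sigma^+$ — i.e. that the two cases in the definition of $inv(\sigma)$ (according to whether $\mathtt{I}_\sigma^<=\mathtt{I}_\sigma$) really do capture every way a not-in-$\mathsf{WR}_{l+1}$ simplex containing $\sigma^-$ can be extended while staying outside $\mathsf{WR}_{l+1}$. This is where the distributed-computing content lives: one has to translate "simplex of $\mathsf{WR}_l$" into the existence of an $IS$-execution of length $l+1$ realizing the views, and then argue about which processes can be forced to have view equal to $\mathtt{I}_\sigma$ in the last layer. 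I would handle it by the matrix representation $\mathrm{W}(\sigma)$, reducing the claim to a purely combinatorial statement about the columns, and then checking the two cases of $inv(\sigma)$ directly. The ordering argument in Step~2 and the $G$-freeness check in Step~3 are then essentially bookkeeping on top of Propositions~\ref{prop:properties_simplices_aux} and~\ref{prop:properties_group_action}.
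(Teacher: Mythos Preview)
Your overall architecture matches the paper's: partition $\mathsf{WR}_l\setminus\mathsf{WR}_{l+1}$ into the intervals $\mathtt{I}_-^+(\sigma)$, remove each interval as an elementary collapse with free face $\sigma^-$, and then group the collapses into $S_{[n]}$-orbits. The gap is in Step~2 (and it propagates into Steps~1 and~3). Your claim that ``any order works'' rests on the assertion that $\sigma'^-\subseteq\sigma^+$ forces $\sigma$ and $\sigma'$ to generate the same interval, citing Proposition~\ref{prop:properties_simplices_aux}~(4)--(5). But (4) requires $\sigma^-\subseteq\tau\subseteq\sigma^+$; merely having $\tau\subseteq\sigma^+$ is not enough. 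Concretely, for $n=2$ and $l=0$ take $\sigma_2=\{(0,\{0,1\}),(1,[2]),(2,[2])\}$ and $\rho=\{(0,\{0,1\}),(1,[2]),(2,\{0,2\})\}$, both maximal in $\mathsf{WR}_0$. One computes $\sigma_2^-=\{(0,\{0,1\}),(1,[2])\}\subseteq\rho=\rho^+$, yet $\rho^-=\{(0,\{0,1\}),(2,\{0,2\})\}\not\subseteq\sigma_2^-$, so the two intervals are distinct. Hence $\sigma_2^-$ has \emph{two} maximal cofaces in $\mathsf{WR}_0$ and is not free until the interval of $\rho$ has been removed. So the order matters, and your Step~1 verification (``any $\tau\supseteq\sigma^-$ with $\tau\notin\mathtt{I}_-^+(\sigma)$ fails $\tau\subseteq\sigma^+$'') is a tautology that does not establish freeness: cofaces of $\sigma^-$ outside the interval do exist.

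The paper fixes this by ordering the intervals (indexed by their tops $\sigma_i=\sigma_i^+$) so that $\dim(\sigma_i)_l^<$ is non-increasing, breaking ties by non-increasing $\dim\sigma_i$. The point is that if $\sigma_i^-\subseteq\sigma_j$ with $\sigma_j$ still present (so $j\geq i$), then $(\sigma_i)_l^<=(\sigma_i^-)_l^<\subseteq(\sigma_j)_l^<$; the ordering forces equality, whence $inv(\sigma_i)=inv(\sigma_j)$ via Corollary~\ref{cor:properties_WR_l}(2), and then $\sigma_i=\sigma_i^-\cup inv(\sigma_i)\subseteq\sigma_j$, so maximality gives $\sigma_i=\sigma_j$. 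In the example above, $\dim(\rho)_0^<=1>0=\dim(\sigma_2)_0^<$, so $\rho$ is processed first. Once you have this ordering, your Step~3 goes through essentially as you wrote it, because orbit elements share the same value of $(\dim\sigma_l^<,\dim\sigma)$ and hence sit at the same stage; at that stage $I(\sigma^-)$ in the current complex equals $\mathtt{I}_-^+(\sigma)$, and distinct orbit representatives give disjoint intervals, yielding $S_{[n]}$-freeness.
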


\begin{proof}
Since $\sigma\in I_{-}^{+}(\sigma)$ for all simplices $\sigma\in\mathsf{WR}_l$, the intervals $I_{-}^{+}(\sigma)$ cover $L=\{\sigma \ : \ \sigma\in\mathsf{WR}_l, \ \sigma\not\in\mathsf{WR}_{l+1}\}$. Additionally, Proposition~\ref{prop:properties_simplices_aux} $(4)$ implies that $L$ can be decomposed as a disjoint union of intervals
\begin{equation*}
I_{-}^{+}(\sigma_1),\ldots,I_{-}^{+}(\sigma_k)
\end{equation*}
such that $\sigma_i=\sigma_i^+$ for all $1\leq i\leq k$. Suppose $\dim(\sigma_{i+1})_l^<\leq\dim(\sigma_{i})_l^<$ or if $\dim(\sigma_{i+1})_l^<=\dim(\sigma_{i})_l^<$ then $\dim(\sigma_{i+1})\leq\dim(\sigma_{i})$. We will prove by induction on $i$, $1\leq i\leq k$, that
\begin{equation*}
\mathsf{WR}_l^{i}\searrow_{\sigma_i^-}\mathsf{WR}_l^{i+1}
\end{equation*}
where $\mathsf{WR}_l^{1}=\mathsf{WR}_l$ and $\mathsf{WR}_l^{k+1}=\mathsf{WR}_{l+1}$. If there exists a maximal simplex $\sigma\in\mathsf{WR}_l^{i}$ such that $\sigma_i\subseteq\sigma$ then $\sigma=\sigma_j$ for some $i\leq j\leq k$. Hence $(\sigma_i)_l^<\subseteq(\sigma_j)_l^<$ and therefore $\sigma_i=\sigma_j$. Now suppose there exists a maximal simplex $\sigma_j\in\mathsf{WR}_l^{i}$ with $i\leq j\leq k$ such that $\sigma_i^-\subseteq\sigma_j$. This implies that $(\sigma_i)_l^<=(\sigma_j)_l^<$ and $inv(\sigma_i)=inv(\sigma_j)$. Thus $\sigma_i=\sigma_i^-\cup inv(\sigma_i)\subseteq\sigma_j\cup inv(\sigma_j)=\sigma_j$
and therefore $\sigma_i^-$ is free in $\mathsf{WR}_l^{i}$. Therefore,
\begin{equation*}
\mathsf{WR}_l=\mathsf{WR}_l^{1}\searrow_{\sigma_1^-}\ldots\searrow_{\sigma_k^-}\mathsf{WR}_l^{k+1}=\mathsf{WR}_{l+1}.
\end{equation*}
Now if we specify in  more detail the order of the sequence, the complex $\mathsf{WR}_l$ can be collapsed to $\mathsf{WR}_{l+1}$ in a $S_{[n]}$-equivariant way. First note that if $\pi(\sigma_i)\in\mathtt{I}_-^+(\sigma_j)$ for some $1\leq j\leq k$, then Proposition~\ref{prop:properties_group_action} $(3)$ and Proposition~\ref{prop:properties_simplices_aux} $(4)$ imply that $\pi(\sigma_i)=\sigma_j$. Moreover, $\dim(\sigma_i)_l^<=\dim\pi(\sigma_i)_l^<$ and $\dim(\sigma_i)=\dim\pi(\sigma_i)$. Hence the set $\{\sigma_1,\ldots,\sigma_k\}$ can be partitioned according to the equivalence relation given by: $\sigma_i\sim\sigma_j$ if there exists $\pi\in S_{[n]}$ such that $\pi(\sigma_i)=\sigma_j$. Let $\tau_1,\ldots,\tau_p$ be representatives of the equivalence classes which satisfy the order given in the proof of the item $1$, then
\begin{equation*}
\mathsf{WR}_l\searrow_{S_{[n]}(\tau_1^-)}\cdots\searrow_{S_{[n]}(\tau_p^-)}\mathsf{WR}_{l+1}.
\end{equation*}
\qed
\end{proof}

Fig.~\ref{fig:complexes_collapse} illustrates the collapsing procedure $\mathsf{WR}_0\searrow_{S_{[n]}}\mathsf{WR}_1$ for $n=2$. In this case consider the simplexes $\sigma_1=\{(1,\{1,2\}),(2,\{0,2\}),(0,[2])\}$ and $\sigma_2=\{(0,\{0,1\}),(1,[2]),(2,[2])\}$ then
\begin{equation*}
\mathsf{WR}_0^{1}\searrow_{S_{[n]}(\sigma_1^-)}\mathsf{WR}_0^{2}\searrow_{S_{[n]}(\sigma_2^-)}\mathsf{WR}_0^{3}=\mathsf{WR}_1
\end{equation*}

\begin{figure}
\begin{center}
\includegraphics[scale=0.5]{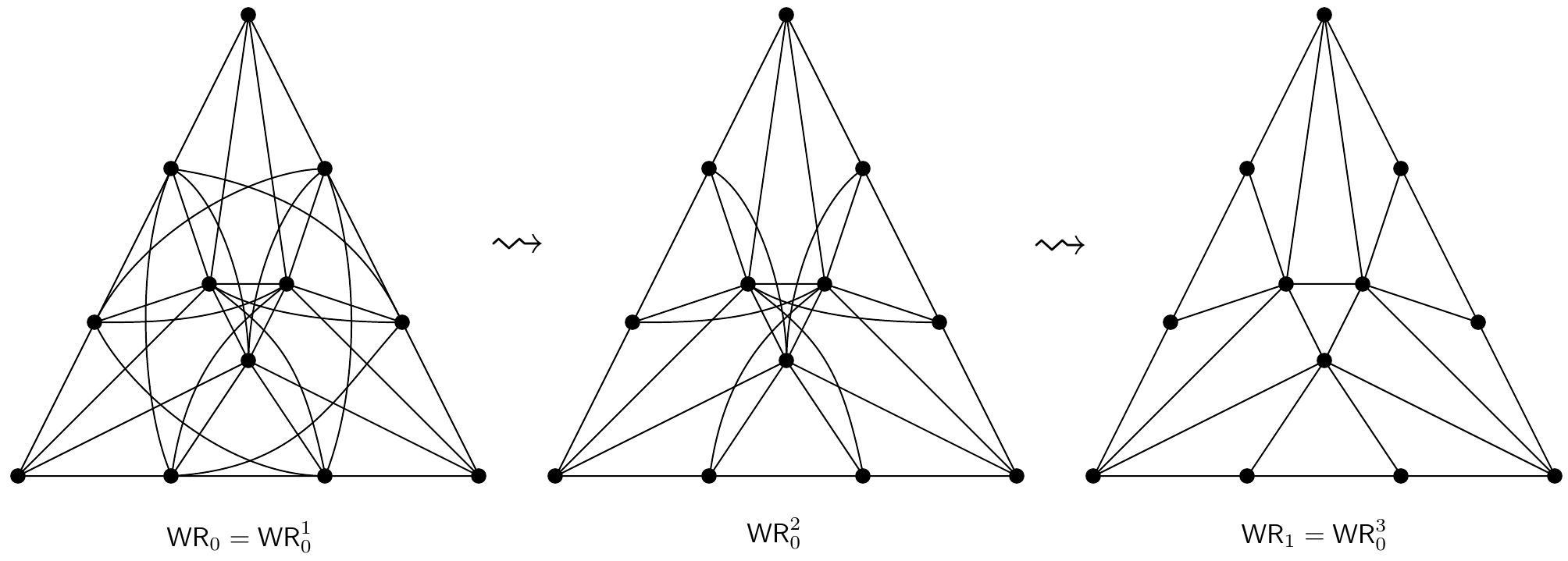}
\end{center}
\caption{$S_{[n]}$-collapse.}
\label{fig:complexes_collapse}
\end{figure}

We have the following consequence.

\begin{corollary}\label{cor:collapsibility_ComplexIS}
For every natural number $n$, the simplicial complex $\mathsf{WR}(\Delta^n)$ is $S_{[n]}$-collapsible to $\chi(\Delta^n)$.
\end{corollary}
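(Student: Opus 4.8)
The goal is to deduce the corollary from Theorem~\ref{theo:Collapsability} by iterating the collapse $\mathsf{WR}_l\searrow_{S_{[n]}}\mathsf{WR}_{l+1}$ over all $l$ from $0$ to $n+1$, and then identifying the two endpoints of the resulting chain with the complexes named in the statement. Concretely, I would chain together
\begin{equation*}
\mathsf{WR}(\Delta^n)=\mathsf{WR}_0\searrow_{S_{[n]}}\mathsf{WR}_1\searrow_{S_{[n]}}\cdots\searrow_{S_{[n]}}\mathsf{WR}_{n+1}
\end{equation*}
using Theorem~\ref{theo:Collapsability}(2) at each step (note the theorem is stated for every $0\le l\le n+1$, so all the links are available), and use the transitivity of $S_{[n]}$-collapsibility, which is immediate from the definition of $\searrow_{S_{[n]}}$ as the existence of \emph{a sequence} of $G$-collapses. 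The first equality $\mathsf{WR}(\Delta^n)=\mathsf{WR}_0$ is just the definition given in Section~\ref{sec:Protocol_Complex}, so nothing is needed there.

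\textbf{Identifying the terminal complex.} The substantive point is that $\mathsf{WR}_{n+1}=\chi(\Delta^n)$, the standard chromatic subdivision of $\Delta^n$. Here I would argue as follows. By the discussion after Lemma~\ref{lem:E_l+1_E_l}, $\mathcal{E}_l=\{view(\alpha)\ :\ l(\alpha)=l+1\}$ and the chain $\mathcal{E}_n\subseteq\cdots\subseteq\mathcal{E}_0$ stabilizes: Proposition~\ref{prop:one_process_terminates} forces $l(\alpha)\le n+1$ for every $IS$-execution, so in fact $\mathcal{E}_{n+1}=\mathcal{E}_n$, and by the remark in Section~\ref{subsec:Algo_IS} (citing Theorem~1 of~\cite{2010RecursionDictributedComputing_GR}) $\mathcal{E}_n$ is exactly the set of views of the immediate snapshot model. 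Since the vertices and simplices of $\mathsf{WR}_l$ are defined from $\mathcal{E}_l$, this gives $\mathsf{WR}_{n+1}=\mathsf{WR}_n=\mathrm{View}^n$, and $\mathrm{View}^n$ is the standard chromatic subdivision $\chi(\Delta^n)$ of the input complex --- the one-round IS protocol complex --- which is the combinatorial fact recalled in the ``Related work'' paragraph and in~\cite{2012ChromaticSubdivision_K,2013DistributedCombinatorialTopology_HKR}. One should also check that the matrix representation $\mathrm{W}(\sigma)$ of Section~\ref{sec:Protocol_Complex} restricts, on $\mathsf{WR}_{n+1}$, to exactly the matrices representing simplices of $\mathrm{View}^n$ in~\cite{2013TopologyComplexView_K}; this is where the observation ``$\chi(\Delta^n)$ and $\mathrm{View}^n$ are subcomplexes of $\mathsf{WR}(\Delta^n)$'' gets upgraded to the equality $\mathsf{WR}_{n+1}=\chi(\Delta^n)$.

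\textbf{Main obstacle.} The collapsibility machinery is entirely supplied by Theorem~\ref{theo:Collapsability}, so the only real work is the endpoint identification $\mathsf{WR}_{n+1}=\chi(\Delta^n)$, i.e.\ making precise the claim (asserted but not fully proved in the body) that after $n+1$ layers of the recursive IS algorithm one has reached exactly the immediate-snapshot complex, and that this complex is the standard chromatic subdivision rather than merely homeomorphic to it or containing it. I expect this to hinge on two things: (i) the stabilization $\mathcal{E}_{n+1}=\mathcal{E}_n$, which follows from Proposition~\ref{prop:one_process_terminates}/Lemma~\ref{lem:E_l+1_E_l} since no $IS$-execution can have length exceeding $n+1$ and Lemma~\ref{lem:E_l+1_E_l} collapses any length-$(l+2)$ execution's view to a length-$(l+1)$ one; and (ii) the known combinatorial description of $\mathrm{View}^n$ from~\cite{2013TopologyComplexView_K} together with the fact that $\mathrm{View}^n=\chi(\Delta^n)$ from~\cite{2012ChromaticSubdivision_K}. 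Once those are in hand, the corollary is just the composition of the $n+2$ (indeed, the nontrivial ones among the) $S_{[n]}$-collapses from the theorem, read off as a single $S_{[n]}$-collapse $\mathsf{WR}(\Delta^n)\searrow_{S_{[n]}}\chi(\Delta^n)$.
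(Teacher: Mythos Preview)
Your approach is exactly the one the paper intends: the corollary is stated with no proof, just ``We have the following consequence,'' so the argument is simply to compose the $S_{[n]}$-collapses of Theorem~\ref{theo:Collapsability}(2) along the chain $\mathsf{WR}(\Delta^n)=\mathsf{WR}_0\searrow_{S_{[n]}}\cdots\searrow_{S_{[n]}}\mathsf{WR}_n=\chi(\Delta^n)$, using transitivity.

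There is, however, a factual slip in your endpoint identification. You write $\mathsf{WR}_n=\mathrm{View}^n$ and then ``$\mathrm{View}^n$ is the standard chromatic subdivision $\chi(\Delta^n)$.'' In this paper $\mathrm{View}^n$ denotes the \emph{atomic snapshot} (AS) protocol complex of~\cite{2013TopologyComplexView_K}, not the immediate-snapshot complex; the paper explicitly lists $\chi(\Delta^n)$ and $\mathrm{View}^n$ as two \emph{distinct} subcomplexes of $\mathsf{WR}(\Delta^n)$, and in fact $\chi(\Delta^n)\subsetneq\mathrm{View}^n$. The correct route is direct: Section~\ref{subsec:Algo_IS} says $\mathcal{E}_n$ is exactly the set of immediate-snapshot views (via Theorem~1 of~\cite{2010RecursionDictributedComputing_GR}), and the one-round IS protocol complex is $\chi(\Delta^n)$, so $\mathsf{WR}_n=\chi(\Delta^n)$ without any detour through $\mathrm{View}^n$. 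Consequently you can stop the chain at $l=n$; the extra step to $\mathsf{WR}_{n+1}$ and the stabilization argument are unnecessary (and the matrix-representation check you propose would, if carried out, land you in $\mathrm{View}^n$ rather than $\chi(\Delta^n)$, which is not what you want).
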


\subsubsection*{Multi-round protocol complex}

A \emph{carrier map}
$\Phi$ from complex  $\mathcal{C}$ to  complex $\mathcal{D}$  assigns to each simplex $\sigma$ a subcomplex $\Phi(\sigma)$ of $\mathcal{D}$ such that $\Phi(\tau)\subseteq\Phi(\sigma)$ if $\tau\subseteq\sigma$. 
The protocol complex  of  the iterated write/read model (see~Fig. \ref{fig:iterated_complex}),  $k\geq 0$, is
$\mathsf{WR}^{(k+1)}(\Delta^n)=\bigcup\limits_{\sigma\in\mathsf{WR}^{(k)}(\Delta^n)}\mathsf{WR}(\sigma).
$

\begin{figure}
\begin{center}
\includegraphics[scale=0.48]{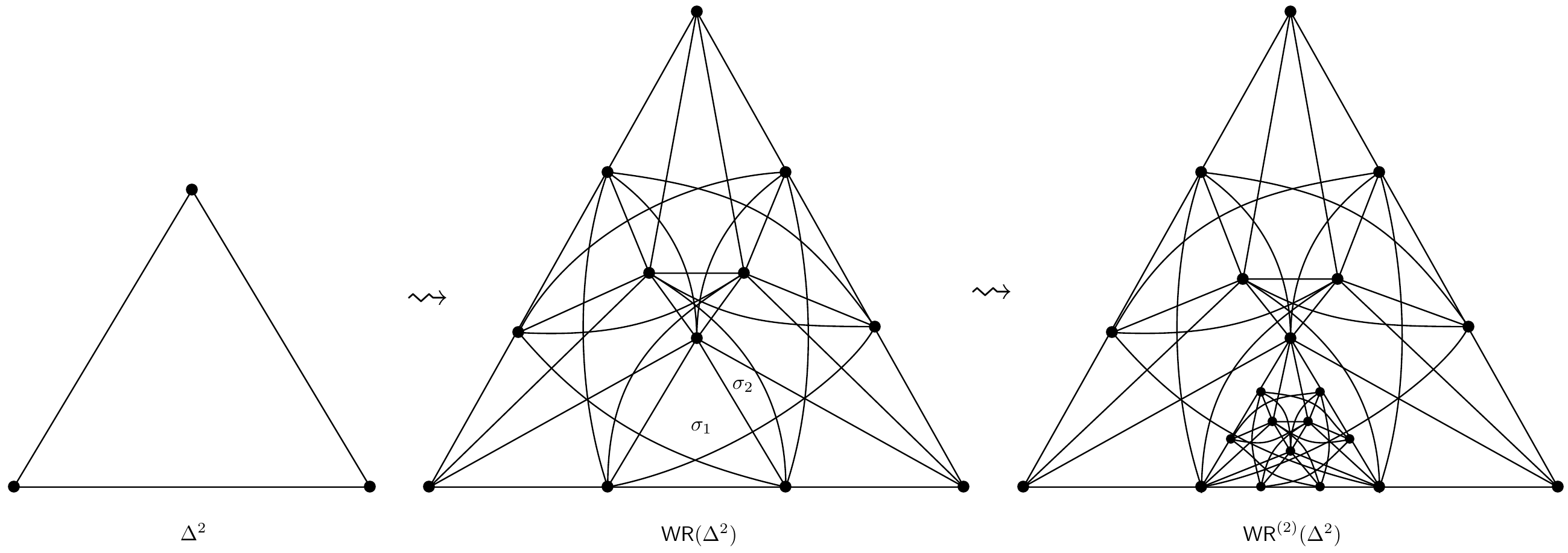}
\end{center}
\caption{Complexes of the iterated model;  in $\mathsf{WR}^{(2)}(\Delta^2)$ only 
$\mathsf{WR}(\sigma_1)$ is depicted.}
\label{fig:iterated_complex}
\end{figure}

\begin{corollary}\label{cor:collasability_IteratedComplexWR}
For all $k\geq 1$,
$\mathsf{WR}^{(k)}(\Delta^n)\searrow\chi^{(k)}(\Delta^n).$
\end{corollary}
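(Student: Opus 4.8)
The plan is to bootstrap from the single-round result (Corollary~\ref{cor:collapsibility_ComplexIS}), which gives $\mathsf{WR}(\Delta^n) \searrow_{S_{[n]}} \chi(\Delta^n)$, and lift it through the iterated construction by induction on $k$, using the carrier-map machinery just introduced. The base case $k=1$ is exactly Corollary~\ref{cor:collapsibility_ComplexIS}. For the inductive step, I would assume $\mathsf{WR}^{(k)}(\Delta^n) \searrow \chi^{(k)}(\Delta^n)$ and try to produce $\mathsf{WR}^{(k+1)}(\Delta^n) \searrow \chi^{(k+1)}(\Delta^n)$. The two complexes are built by the same recipe over their respective ``ground'' complexes: $\mathsf{WR}^{(k+1)}(\Delta^n) = \bigcup_{\sigma \in \mathsf{WR}^{(k)}(\Delta^n)} \mathsf{WR}(\sigma)$ and $\chi^{(k+1)}(\Delta^n) = \bigcup_{\sigma \in \chi^{(k)}(\Delta^n)} \chi(\sigma)$. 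So there are two independent things to collapse away: the extra ``width'' of each fiber $\mathsf{WR}(\sigma)$ over $\chi(\sigma)$, and the extra simplices of $\mathsf{WR}^{(k)}$ not in $\chi^{(k)}$.

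The natural order is: first collapse each fiber $\mathsf{WR}(\sigma) \searrow \chi(\sigma)$ for $\sigma$ ranging over the facets of $\mathsf{WR}^{(k)}(\Delta^n)$, then invoke the inductive hypothesis on the base. Concretely, I would fix a facet $\sigma$ of $\mathsf{WR}^{(k)}$, apply the single-round collapse $\mathsf{WR}(\sigma) \searrow \chi(\sigma)$ inside the big complex, and argue this is a legitimate collapse of $\mathsf{WR}^{(k+1)}$: the free-face pairs of Theorem~\ref{theo:Collapsability} live entirely in the ``new'' vertices produced by the $(k{+}1)$st round over $\sigma$, which are shared with no other fiber $\mathsf{WR}(\sigma')$ (distinct facets of $\mathsf{WR}^{(k)}$ produce disjoint sets of fresh vertices at round $k+1$), so freeness is preserved when passing to the union. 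One must handle lower-dimensional $\sigma$ carefully — a non-facet simplex $\sigma$ of $\mathsf{WR}^{(k)}$ has $\mathsf{WR}(\sigma)$ as a subcomplex of $\mathsf{WR}(\sigma')$ for any facet $\sigma' \supseteq \sigma$, and the carrier-map condition $\Phi(\tau) \subseteq \Phi(\sigma)$ for $\tau \subseteq \sigma$ guarantees these nest compatibly — so iterating over facets in any order suffices and leaves exactly $\bigcup_{\sigma \in \mathsf{WR}^{(k)}} \chi(\sigma)$. This last complex is $\bigcup_{\sigma \in \mathsf{WR}^{(k)}(\Delta^n)} \chi(\sigma)$, and since every facet of $\chi(\sigma)$ uses only the vertices determined by $\sigma$, collapsing the base $\mathsf{WR}^{(k)} \searrow \chi^{(k)}$ (inductive hypothesis) drags the $\chi(\cdot)$ fibers along and yields $\bigcup_{\sigma \in \chi^{(k)}} \chi(\sigma) = \chi^{(k+1)}(\Delta^n)$.

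The main obstacle I anticipate is the last sentence: justifying that a collapse of the base complex induces a collapse of the ``fibered'' complex $\bigcup_{\sigma}\chi(\sigma)$. This is the step where carrier maps do real work — one needs that whenever $\tau$ is a free face of $\sigma$ in $\mathsf{WR}^{(k)}$ witnessed by facet $\sigma$, the corresponding faces $\chi(\tau) \subseteq \chi(\sigma)$ assemble into a free face / collapse in $\bigcup \chi(\cdot)$, i.e. that the operation ``apply $\chi$'' transports elementary collapses to (sequences of) elementary collapses. The clean way to phrase this is as a lemma: if $\mathcal{C} \searrow \mathcal{C}'$ and $\Phi$ is a carrier map that is ``rigid'' (here, $\chi(\sigma)$ depends only on the vertex set of $\sigma$ and $\dim \chi(\sigma) = \dim \sigma$), then $\bigcup_{\sigma \in \mathcal{C}} \Phi(\sigma) \searrow \bigcup_{\sigma \in \mathcal{C}'} \Phi(\sigma)$; proving this lemma by tracking a single elementary collapse $\mathcal{C} \searrow_\tau \mathcal{C}''$ and unwinding $\chi(I(\tau))$ is the technical heart, after which the induction closes routinely.
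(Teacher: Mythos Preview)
Your plan is essentially the paper's argument: induct on $k$; in the inductive step first collapse every fiber $\mathsf{WR}(\sigma)\searrow\chi(\sigma)$ over the facets $\sigma$ of the previous-round complex using Theorem~\ref{theo:Collapsability}, landing in $\chi(\mathsf{WR}^{(k-1)}(\Delta^n))$; then push the inductive collapse $\mathsf{WR}^{(k-1)}(\Delta^n)\searrow\chi^{(k-1)}(\Delta^n)$ through $\chi$. The paper carries out the first stage by fixing isomorphisms $f_i:\mathsf{WR}(\sigma_1)\to\mathsf{WR}(\sigma_i)$ and running the $S_{[n]}$-collapse sequence of Theorem~\ref{theo:Collapsability} simultaneously through all $f_i$. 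Your ``$\chi$ transports collapses'' lemma is exactly Proposition~\ref{prop:CollapsabilityChromatic} (phrased there as $\chi(\Delta^n)\searrow\chi(\Lambda_p^n)$), which the paper proves by an explicit three-stage collapse rather than from abstract carrier-map axioms.

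One point to tighten: your claim that ``distinct facets of $\mathsf{WR}^{(k)}$ produce disjoint sets of fresh vertices at round $k{+}1$'' is false, since $\mathsf{WR}(\sigma)\cap\mathsf{WR}(\sigma')=\mathsf{WR}(\sigma\cap\sigma')$ whenever the facets $\sigma,\sigma'$ share a face, and for $n\geq 3$ this intersection contains simplices outside $\chi$. What actually makes the parallel fiber collapses compatible is that the matching $\rho\mapsto(\rho^-,\rho^+)$ of Theorem~\ref{theo:Collapsability} is \emph{intrinsic}: $inv(\rho)$, $\rho^-$ and $\rho^+$ are computed purely from the views appearing in $\rho$ (and the fixed $n,l$), so on any shared face the two fibers perform the very same elementary collapses and no conflict arises. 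The paper does not spell this out either, but it is the correct replacement for your disjointness assertion.
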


The collapsing procedure 
consists first in collapsing, in  parallel, each subcomplex $\mathsf{WR}(\sigma)$ where $\sigma$ is a maximal simplex of $\mathsf{WR}^{(k-1)}(\Delta^n)$ as in   Theorem~\ref{theo:Collapsability}. An illustration is in  
Fig.~\ref{fig:first_collapsing}, applied to the  simplexes 
 $\sigma_1=\{(0,\{0,1\}),(1,\{0,1\}),(2,[2])\}$ and $\sigma_2=\{(0,\{0,1\}),(1,[2]),(2,[2])\}$ of $\mathsf{WR}(\Delta^2)$. 
Second, 
we collapse $\chi(\mathsf{WR}^{(k-1)}(\Delta^n))$ to $\chi^{(k)}(\Delta^n)$. 
For example $\sigma_2$ is collapsed to $\tau=\{(0,\{0,1\}),(1,[2])\}$, see Fig. \ref{fig:second_collapsing}, then 
Proposition~\ref{prop:CollapsabilityChromatic} implies that $\chi(\sigma_1)$ is collapsed to $\chi(\tau)$.

\begin{figure}
\begin{center}
\includegraphics[scale=0.4]{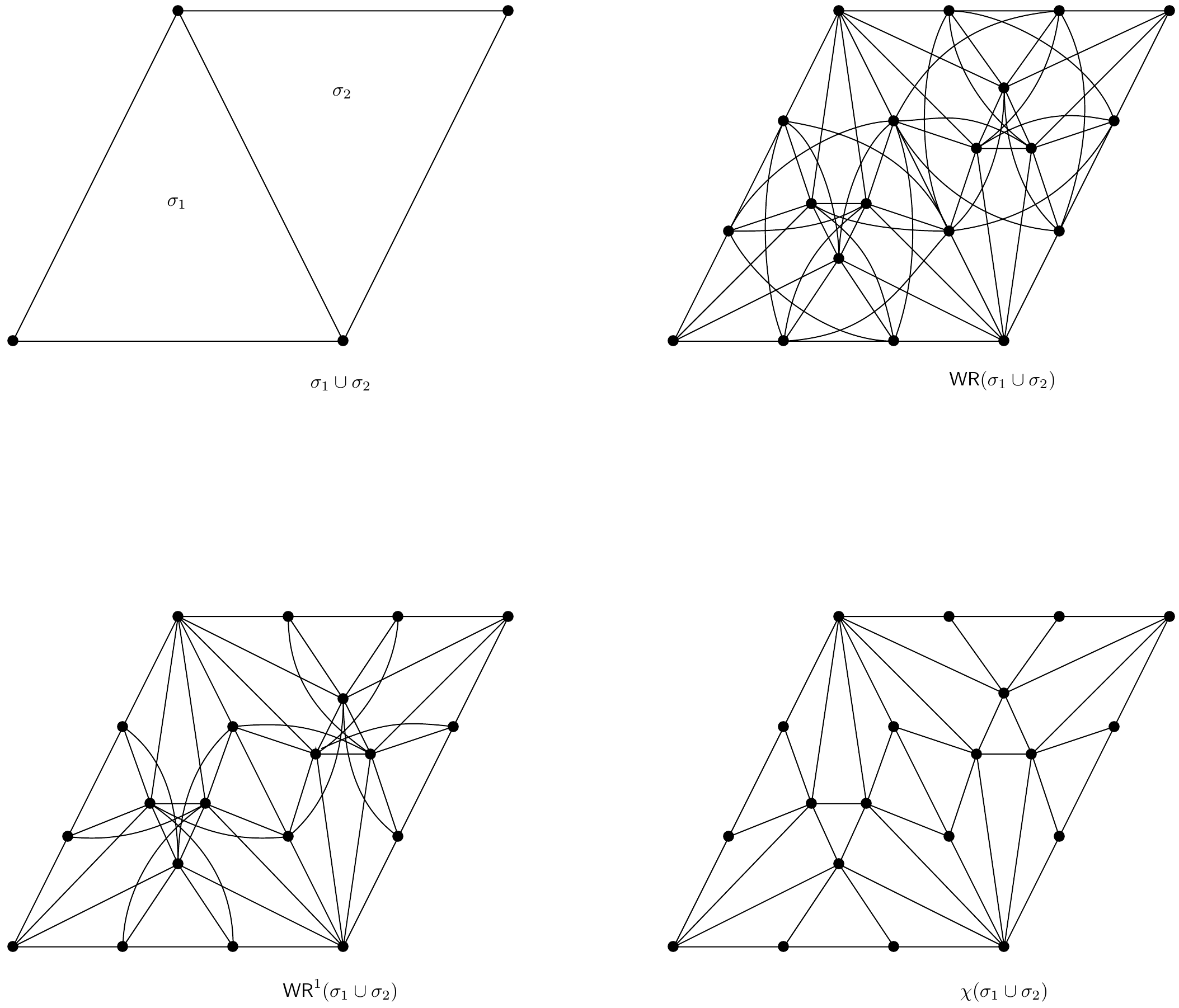}
\end{center}
\caption{First Collapsing.}
\label{fig:first_collapsing}
\end{figure}

\begin{figure}
\begin{center}
\includegraphics[scale=0.45]{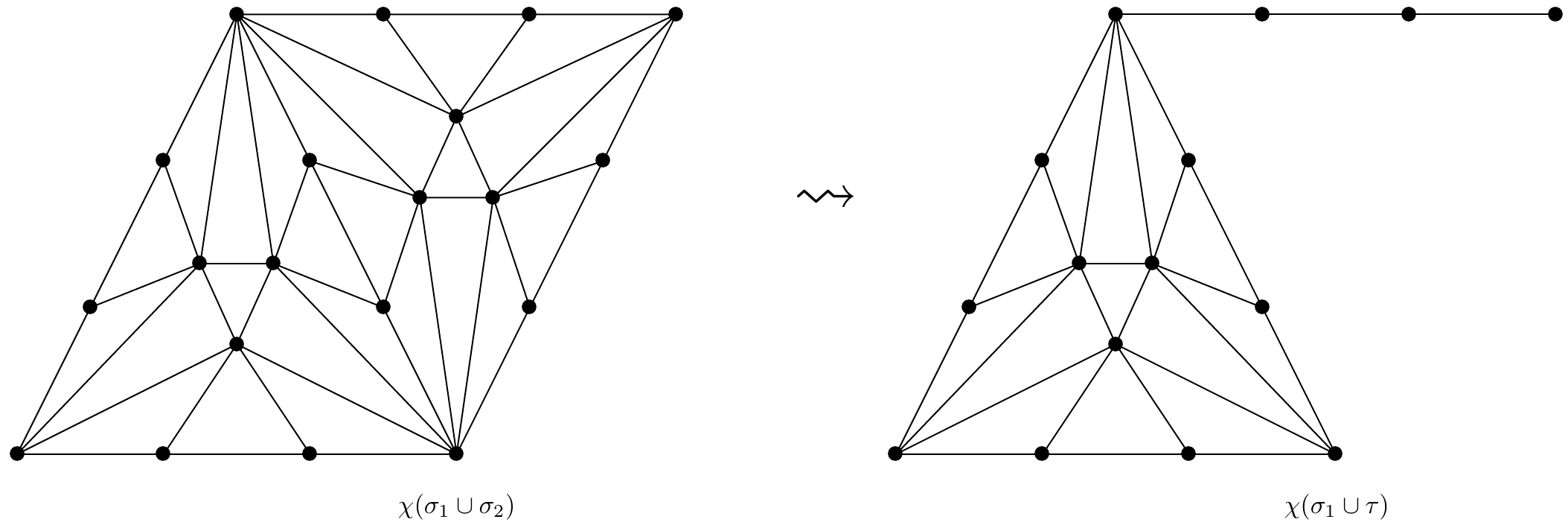}
\end{center}
\caption{Second Collapsing.}
\label{fig:second_collapsing}
\end{figure}

\bibliography{bibtexdoctorado}

\bibliographystyle{plain}

\newpage

\appendix
\noindent
{\bf\Large Appendix}

\section{Proofs}
\label{app:proofs}

\begin{lemma-repeat}{lem:equal_interval}
Let $\alpha$ and $\alpha'$ be $IS$-executions with $l(\alpha)=l(\alpha')$. Given $0\leq k\leq l$,
\begin{enumerate}
\item If $\alpha\sim\alpha'$ then $\mathtt{Id}(\alpha_k)=\mathtt{Id}(\alpha'_k)$.
\item If $\alpha_k\sim\alpha'_k$ then $\alpha\sim\alpha'$.
\end{enumerate}
\end{lemma-repeat}
\begin{proof}
To see $1$ note that $\mathtt{Id}_k(\alpha_k)=\mathtt{Id}_k(\alpha_k')$ since $view(\alpha)=view(\alpha')$. Hence  $\mathtt{Id}(\alpha_k)=\mathtt{Id}(\alpha'_k)$. Now, $2$ is a consequence of equality $view(\alpha_k)=view(\alpha_k')$.
\qed
\end{proof}



\begin{proposition-repeat}{prop:properties_simplices_aux}
If $\sigma\not\in\mathsf{WR}_{l+1}$ then
\begin{enumerate}
\item $\sigma^+=\sigma^-\cup inv(\sigma)$.
\item $\sigma^-\subseteq\sigma\subseteq\sigma^+$.
\item $\sigma^-\not\in\mathsf{WR}_{l+1}$.
\item If $\sigma^-\subseteq\tau\subseteq\sigma^+$ then $\sigma^-=\tau^-$ and $\sigma^+=\tau^+$.
\item $(\sigma^-)^-=\sigma^-$.
\end{enumerate}
\end{proposition-repeat}
\begin{proof}
Items $1$ and $2$ follow directly from the definitions of $\sigma^-$ and $\sigma^+$. Now, for $3$, notice that $\sigma_l^<=(\sigma^-)_l^<$, which implies
$\mathtt{I}_{\sigma}^<=\mathtt{I}_{\sigma^-}^<$. Therefore 
the statement follows from Theorem~\ref{theo:not_in_WR_l+1}. To prove $4$, first note that $\tau=\sigma^-\cup\rho$, where $\rho\subseteq inv(\sigma)$.  Therefore $\sigma_l^<=\tau_l^<$ and $\mathtt{I}_{\sigma}=\mathtt{I}_{\tau}$ so $inv(\sigma)=inv(\tau)$. This implies that
\begin{equation*}
\sigma^-=(\sigma-inv(\sigma))-inv(\sigma)\subseteq\tau-inv(\tau)\subseteq(\sigma\cup inv(\sigma))-inv(\sigma)=\sigma^-
\end{equation*}
then $\sigma^-=\tau^-$. From $1$ it is obtained
\begin{equation*}
\tau^+=\tau^-\cup inv(\tau)=\sigma^-\cup inv(\sigma)=\sigma^+.
\end{equation*}
Finally, to prove $(5)$, observe that $(\sigma^-)_l^<=\sigma_l^<$ and $\mathtt{I}_{\sigma^-}=\mathtt{I}_{\sigma}$. Thus $inv(\sigma^-)=inv(\sigma)$, which proves
\begin{equation*}
(\sigma^-)^-=\sigma^--inv(\sigma^-)=(\sigma-inv(\sigma))-inv(\sigma)=\sigma-inv(\sigma)=\sigma^-.
\end{equation*}
\qed
\end{proof}

\begin{proposition-repeat}{prop:properties_group_action}
Let $\sigma\in\mathsf{WR}_l$ be a simplex.
\begin{enumerate}
\item $\pi(\sigma)\in\mathsf{WR}_l$.
\item $\pi(\sigma^-)=\pi(\sigma)^-$.
\item $\pi(\sigma^+)=\pi(\sigma)^+$.
\item $\pi(\mathtt{I}_-^+(\sigma))=\mathtt{I}_-^+(\pi(\sigma))$.
\end{enumerate}
\end{proposition-repeat}
\begin{proof}
Item $(1)$ is clear. $(2)$ and $(3)$ are deduced from the equalities $\pi(\sigma_l^<)=\pi(\sigma)_l^<$ and $\pi(\sigma_l^=)=\pi(\sigma)_l^=$. The last item follows from Proposition~\ref{prop:properties_simplices_aux} $(4)$.
\qed
\end{proof}

Let $\Delta, \Gamma$ be simplicial complexes. Consider the set $C=\{\tau_1,\ldots,\tau_k\}$ of free simplexes in $\Delta$ such that $I(\tau_i)\cap I(\tau_j)=\emptyset$ for all $i\neq j$. This equation means that we can collapse $\tau_1,\ldots,\tau_k$ in a parallel way. This will be denoted by $\Delta\searrow_{C}\Gamma$.

\begin{corollary-repeat}{cor:collasability_IteratedComplexWR}
For all $k\geq 1$,
$\mathsf{WR}^{(k)}(\Delta^n)\searrow\chi^{(k)}(\Delta^n).$
\end{corollary-repeat}
\begin{proof}
The proof is by induction on $k$. Theorem~\ref{theo:Collapsability} proves the base case $k=1$. For $k>1$, let $\sigma_1,\ldots,\sigma_m$ be the maximal simplexes of $\mathsf{WR}^{(k-1)}(\Delta^n)$. Now consider $\sigma_1$ as a standard simplicial complex, then let $\tau_1^{-},\ldots,\tau_k^{-}$ be a sequence of the $S_{[n]}$-collapsability
\begin{equation*}
\mathsf{WR}_l(\sigma_1)\searrow_{S_{[n]}}\mathsf{WR}_{l+1}(\sigma_1)
\end{equation*}
with $0\leq l\leq n-1$ where $\mathsf{WR}_0(\sigma_1)=\mathsf{WR}(\sigma_1)$ and $\mathsf{WR}_n(\sigma_1)=\chi(\sigma_1)$. On the other hand there exists a isomorphism $f_i$ from $\mathsf{WR}(\sigma_1)$ to $\mathsf{WR}(\sigma_i)$ for all $1\leq i\leq m$. This implies that $f_i(\tau_1^{-}),\ldots,f_i(\tau_k^{-})$ is a sequence of the $S_{[n]}$-collapsability
\begin{equation*}
\mathsf{WR}_l(\sigma_i)\searrow_{S_{[n]}}\mathsf{WR}_{l+1}(\sigma_i)
\end{equation*}
for all $1\leq i\leq m$. Consider $G=\{f_1,\ldots,f_m\}$ with $f_1$ the identity simplicial map and $C_j=S_{[n]}(G(\tau_j^{-}))$ with $1\leq j\leq k$. Then for all $0\leq l\leq n-1$
\begin{equation*}
\mathsf{WR}^{(k)}_l(\Delta^n)\searrow_{C_1}\cdots\searrow_{C_k}\mathsf{WR}^{(k)}_{l+1}(\Delta^n)
\end{equation*}
where $\mathsf{WR}^{(k)}_0(\Delta^n)=\mathsf{WR}^{(k)}(\Delta^n)$ and $\mathsf{WR}^{(k)}_n(\Delta^n)=\chi(\mathsf{WR}^{(k-1)}(\Delta^n))$. Thus
\begin{equation*}
\mathsf{WR}^{(k)}(\Delta^n)\searrow\chi(\mathsf{WR}^{(k-1)}(\Delta^n)).
\end{equation*}
By induction hypothesis
\begin{equation*}
\mathsf{WR}^{(k-1)}(\Delta^n)\searrow\chi^{(k-1)}(\Delta^n).
\end{equation*}
Then proposition \ref{prop:CollapsabilityChromatic} implies that
\begin{equation*}
\chi(\mathsf{WR}^{(k-1)}(\Delta^n))\searrow\chi(\chi^{(k-1)}(\Delta^n))
\end{equation*}
and therefore
\begin{equation*}
\mathsf{WR}^{(k)}(\Delta^n)\searrow\chi^{(k)}(\Delta^n).
\end{equation*}
\qed
\end{proof}

\section{Chromatic Subdivision}
\label{app:chromatic-subdivision}

In the proof of Corollary~\ref{cor:collasability_IteratedComplexWR} we used  Proposition~\ref{prop:CollapsabilityChromatic} which appears in~\cite{2014IteratedChromaticCollapsible_GMT}.  For completeness,  we present the proof in  our notation.

Let $\sigma$ be a simplex of $\chi(\Delta^n)$. Notice there exists a vertex $(i,view_i)\in\sigma_0^<$ such that $view_j\subseteq view_i$ for all $(j,view_j)\in\sigma_0^<$. Then $\sigma^c$ will denote the simplex
\begin{equation*}
\sigma^c=\sigma\cup\{(j,[n]) \ : \ j\in[n]\backslash view_i\}.
\end{equation*}

\begin{lemma}\label{lem:CollapsabilityChromatic}
Let $n$ be a natural number. Then $\chi(\Delta^n)$ is collapsible.
\end{lemma}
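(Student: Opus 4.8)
The plan is to exhibit an explicit collapsing order on $\chi(\Delta^n)$, the standard chromatic subdivision, by mimicking the argument of Theorem~\ref{theo:Collapsability}: decompose the simplices that are about to disappear into disjoint intervals $[\sigma^-,\sigma^+]$ indexed by the ``invisible'' part $inv(\sigma)$, and then collapse each interval through its free face, processing the intervals in an order dictated by the dimension of the ``small-view'' part $\sigma_0^<$. Recall that a simplex of $\chi(\Delta^n)$ is a matrix-presentable view $\sigma = \{(i,view_i)\}$ with the nesting and disjointness conditions; the ordered nesting of the sets $view_i$ among the processes in $\sigma_0^<$ (those that have \emph{not} seen everyone) is the key structural fact, already noted just above the statement: there is a top vertex $(i,view_i)\in\sigma_0^<$ with $view_j\subseteq view_i$ for every $(j,view_j)\in\sigma_0^<$.

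First I would make precise the pairing $\sigma\mapsto(\sigma^-,\sigma^+)$ in this chromatic setting, using $\sigma^c$ as defined in the excerpt: $\sigma^c$ fills in the ``missing'' processes as vertices that have seen everything, i.e.\ it is the canonical extension of $\sigma$ to a simplex whose $\mathtt{Id}$ set is all of $[n]$ while keeping $\sigma_0^<$ intact. The analogue of $inv(\sigma)$ is the set of vertices $(j,[n])$ added in forming $\sigma^c$ together with whatever part of the ``top'' block is not yet forced; one sets $\sigma^- = \sigma \setminus inv$ and $\sigma^+ = \sigma^c$. I would then prove, exactly as in Proposition~\ref{prop:properties_simplices_aux}, that these intervals $[\sigma^-,\sigma^+]$ partition the full complex $\chi(\Delta^n)$ (here nothing is being collapsed \emph{to} — the target is the void complex — so every simplex lies in exactly one interval), that $\sigma^- = (\sigma^-)^-$, and that $\sigma^-\subseteq\tau\subseteq\sigma^+$ forces $\tau^-=\sigma^-$, $\tau^+=\sigma^+$. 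This is the bookkeeping half and it is routine given the nesting structure.

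Next I would order the interval representatives $\sigma_1,\dots,\sigma_k$ (each chosen with $\sigma_i=\sigma_i^+=\sigma_i^c$) by decreasing $\dim(\sigma_i)_0^<$, breaking ties by decreasing $\dim\sigma_i$, and argue by induction on $i$ that $\sigma_i^-$ is free in the complex $\chi(\Delta^n)^{(i)}$ obtained after the first $i-1$ collapses: any maximal simplex of $\chi(\Delta^n)^{(i)}$ containing $\sigma_i^-$ must be some $\sigma_j$ with $j\ge i$, and the ordering on $(\cdot)_0^<$ forces $(\sigma_i)_0^< = (\sigma_j)_0^<$, hence $inv(\sigma_i)=inv(\sigma_j)$ and $\sigma_i = \sigma_i^-\cup inv(\sigma_i)\subseteq \sigma_j$, so $\sigma_j$ is the \emph{unique} maximal coface of $\sigma_i^-$. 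The chain of elementary collapses $\chi(\Delta^n) = \chi(\Delta^n)^{(1)}\searrow_{\sigma_1^-}\cdots\searrow_{\sigma_k^-}\chi(\Delta^n)^{(k+1)}=\emptyset$ then gives collapsibility. (One should double-check the boundary bookkeeping so that the very last interval is the one containing the unique vertex/simplex that ``saw everything first,'' which is what guarantees we land on the void complex rather than on a nonempty subcomplex.)

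The main obstacle I anticipate is not the collapsing order itself but verifying that the pairing is well defined on $\chi(\Delta^n)$ rather than on $\mathsf{WR}(\Delta^n)$: the matrix presentation for $\chi(\Delta^n)$ is more rigid (the blocks $I_1,\dots,I_t$ refine a totally ordered chain of views), so I must check that $\sigma^c$ is genuinely a simplex of $\chi(\Delta^n)$ and that $inv(\sigma)\subseteq\sigma^c$ behaves like the $\mathsf{WR}$-case $inv$ — in particular that the two cases in the definition of $inv$ (whether or not $\mathtt{I}^<_\sigma=\mathtt{I}_\sigma$) collapse correctly here. Once that compatibility is established, the rest is a transcription of the proof of Theorem~\ref{theo:Collapsability} with $l=0$ and with ``$\mathsf{WR}_{l+1}$'' replaced by the void complex. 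If one wants the stronger $S_{[n]}$-equivariant version, the same orbit-grouping argument as in Theorem~\ref{theo:Collapsability}, using the analogue of Proposition~\ref{prop:properties_group_action}, applies verbatim; but the statement as given only asks for plain collapsibility, so I would present the non-equivariant version and remark on the equivariant refinement.
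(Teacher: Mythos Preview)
Your proposal is correct and is essentially the paper's own argument: once the $inv$ machinery is unpacked in $\chi(\Delta^n)$ (where the immediate-snapshot constraint forces every $(k,[n])\in\sigma_0^=$ to have $k\notin view_i$, hence $\sigma_0^=\subseteq\{(j,[n]):j\notin view_i\}$) one finds $\sigma^-=\sigma_0^<$ and $\sigma^+=\sigma^c$, and the paper simply writes this pairing down directly, showing $\sigma_0^<$ is a free face of $\sigma^c$ after all intervals with larger $\dim\sigma_0^<$ have been removed. Your secondary tie-break by $\dim\sigma_i$ is harmless but unnecessary, since representatives with equal $\dim(\cdot)_0^<$ have disjoint intervals.
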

\begin{proof}
For $1\leq k\leq n+1$ let
\begin{equation*}
\Delta_k^n=\Delta_{k-1}^n\backslash \bigcup\limits_{\sigma\in\Sigma_{k-1}} I(\sigma_0^<)
\end{equation*}
with $\Delta_0^n=\chi(\Delta^n)$ and $\Sigma_{k-1}=\{\sigma\in\Delta_{k-1}^n \ : \ \dim \sigma_0^<=n-k\}$. Then the simplex $\sigma_0^<$ is a free face of $\sigma^c$ in the simplicial complex $\Delta_{k-1}^n$. Hence
\begin{equation*}
\chi(\Delta^n)=\Delta_{0}^n\searrow\cdots\searrow\Delta_{n+1}^n
\end{equation*}
where $\Delta_{n+1}$ is the void complex.
\qed
\end{proof}

For each natural number $n$ and $p\in[n]$ let
\begin{equation*}
\Lambda_p^n=\Delta^n\backslash I(\sigma')
\end{equation*}
where $\sigma'=[n]\backslash\{p\}$.

\begin{proposition}\label{prop:CollapsabilityChromatic}
For all natural numbers $n$
\begin{equation*}
\chi(\Delta^n)\searrow\chi(\Lambda_p^n).
\end{equation*}
\end{proposition}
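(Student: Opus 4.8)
The plan is to mimic the collapsing scheme of Lemma~\ref{lem:CollapsabilityChromatic}, but restricted to the part of $\chi(\Delta^n)$ that lies "above" the missing face $\sigma'=[n]\setminus\{p\}$. Recall that $\chi(\Lambda_p^n)$ is the subdivided complex of $\Delta^n$ with the open star of $\sigma'$ removed; concretely, the simplices of $\chi(\Delta^n)$ that are \emph{not} in $\chi(\Lambda_p^n)$ are exactly those simplices $\sigma$ whose carrier in $\Delta^n$ contains $\sigma'$, equivalently those $\sigma$ with $\mathtt{Id}(\sigma)\supseteq [n]\setminus\{p\}$ in an appropriate sense — I would first make this characterization precise using the matrix/view representation of $\chi(\Delta^n)$ from Section~\ref{sec:Protocol_Complex}. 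Call this set $L$. The goal is to exhibit $L$ as a disjoint union of intervals $I(\tau)$ with $\tau$ a free face, and then order the collapses so that $\chi(\Delta^n)\searrow\chi(\Lambda_p^n)$.

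The key step is to adapt the pairing $\sigma\mapsto(\sigma_0^<,\sigma^c)$ used in Lemma~\ref{lem:CollapsabilityChromatic}. In that lemma the free face is the "small-view part" $\sigma_0^<$ and its unique cofacet is obtained by adding all vertices $(j,[n])$ for $j$ outside the largest small view. Here I would use the same device but only among simplices in $L$: for $\sigma\in L$, the face $\sigma_0^<$ is free \emph{in the subcomplex $\chi(\Delta^n)$ restricted so that $\sigma'$ survives}, because every maximal simplex containing $\sigma_0^<$ and lying in $L$ must be of the form $\sigma^c$ with the $[n]$-labelled vertices filling out the complement. One then runs the collapses in rounds indexed by $k=1,\dots,n$, at round $k$ collapsing all pairs with $\dim\sigma_0^< = n-k$ that still lie in $L$, exactly as the $\Delta_k^n$ filtration in Lemma~\ref{lem:CollapsabilityChromatic}; the difference is that simplices whose small-view part "wants" to grow past $\sigma'$ are precisely the ones that get removed, and those that do not are precisely the simplices retained in $\chi(\Lambda_p^n)$. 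I would check that at the end of this process exactly the simplices with carrier $\supseteq\sigma'$ have been deleted, leaving $\chi(\Lambda_p^n)$.

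The main obstacle I anticipate is verifying freeness at each round: one must show that when we delete the interval $I(\sigma_0^<)$ inside the current complex $\Delta_{k-1}^n$ intersected with $L$, the face $\sigma_0^<$ genuinely has a \emph{unique} maximal coface within what remains, and that distinct $\sigma$'s in the same round give disjoint intervals $I(\sigma_0^<)$ (so the collapses commute). This is where the combinatorial structure of the chromatic subdivision — that the "large-view" vertices $(j,[n])$ are forced and the small-view vertices form a chain $view_{j}\subseteq view_{i}$ — does the work, essentially as in~\cite{2014IteratedChromaticCollapsible_GMT}. A secondary subtlety is bookkeeping the boundary case: the very last round, where $\sigma_0^<$ has dimension $0$, must be handled so that the vertices of $\sigma'$ themselves are not removed, which is exactly the reason the collapse stops at $\chi(\Lambda_p^n)$ rather than continuing to the void complex as in Lemma~\ref{lem:CollapsabilityChromatic}.
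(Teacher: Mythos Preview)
Your proposal has a genuine gap in the pairing. The characterization of the set $L$ of simplices to be removed is via the \emph{carrier} (the union of all views appearing in $\sigma$), not via $\mathtt{Id}(\sigma)$; the vertex $\{(0,[n])\}$ already has carrier $[n]\supseteq\sigma'$ but $\mathtt{Id}=\{0\}$, so your ``equivalently'' is false. More importantly, the matching $\sigma\mapsto(\sigma_0^<,\sigma^c)$ from Lemma~\ref{lem:CollapsabilityChromatic} does \emph{not} restrict to a matching on $L$: the face $\sigma_0^<$ has carrier equal to the largest small view $view_i\subsetneq[n]$, and unless $view_i=[n]\setminus\{p\}$ this face lies in $\chi(\Lambda_p^n)$, so it cannot be the free face of a collapse that is supposed to leave $\chi(\Lambda_p^n)$ intact. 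For instance, with $\sigma=\{(0,\{0\})\}\cup\{(j,[n]):j\neq 0\}$ and $p\neq 0$, you have $\sigma\in L$ but $\sigma_0^<=\{(0,\{0\})\}\in\chi(\Lambda_p^n)$; removing $I(\sigma_0^<)$ destroys part of the target. This is not a boundary subtlety at the last round---it happens at every round.

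The paper's proof addresses exactly this by using \emph{three} different pairings rather than one. Stage~1 removes simplices whose only full-view vertex is $(p,[n])$, using $\sigma_0^<$ as free face (here freeness holds because such $\sigma$ are maximal of codimension~$0$ over $\sigma_0^<$ in the current complex). Stage~2 then removes the remaining simplices containing $(p,[n])$, but with free face $\sigma|p=\sigma_0^<\cup\{(p,[n])\}$ rather than $\sigma_0^<$; the point is that $\sigma|p$ has carrier $[n]$ and is therefore guaranteed to lie in $L$. Stage~3 handles what remains by invoking Lemma~\ref{lem:CollapsabilityChromatic} on the smaller chromatic subdivisions $\chi(\Delta^{I_i})$ sitting over each fixed block $\sigma_i$ of full-view vertices. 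Your one-stage scheme collapses the distinction that makes Stage~2 (the modified free face) necessary; to repair the argument you would need to introduce a different free face whenever $\sigma_0^<\notin L$, which is essentially what the paper does.
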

\begin{proof}
The collapsing procedure consists in three stages.

\noindent{\bf Stage $1$}. Let
\begin{equation*}
\Sigma=\{\sigma\in\chi(\Delta^n) \ : \ \sigma_0^==\{(p,[n])\}\}.
\end{equation*}
Given $1\leq k\leq n$, consider the simplicial complex
\begin{equation*}
K_k^n=K_{k-1}^n\backslash\bigcup\limits_{\sigma\in\Sigma_{k-1}}I(\sigma_0^<)
\end{equation*}
with $K_0^n=\chi(\Delta^n)$ and $\Sigma_{k-1}=\{\sigma\in\Sigma\cap K_{k-1}^n \ : \ \sigma \hbox{ is maximal}\}$. Notice $\sigma_0^<$ is a free face of $\sigma$ since $\dim \sigma=\dim \sigma_0^<+1$. Therefore
\begin{equation*}
\chi(\Delta^n)=K_0^n\searrow\cdots\searrow K_n^n.
\end{equation*}

\noindent{\bf Stage $2$}.
Now consider the simplexes
\begin{equation*}
\Sigma'=\{\sigma\in K_n^n \ : \ (p,[n])\in\sigma\}.
\end{equation*}
Given $1\leq k\leq n$ let
\begin{equation*}
T_k^n=T_{k-1}^n\backslash\bigcup\limits_{\sigma\in\Sigma'_{k-1}}I(\sigma|p)
\end{equation*}
where $\Sigma_{k-1}'=\{\sigma\in\Sigma'\cap T_{k-1}^n \ : \ \dim\sigma_0^<=n-k-1\}$ and $\sigma|p=\sigma_0^<\cup\{(p,[n])\}$. Then $\sigma|p$ is a free face of $\sigma^c$ with $\sigma\in\Sigma'$. Hence,
\begin{equation*}
K_n^n=T_0^n\searrow\cdots\searrow T_n^n.
\end{equation*}

\noindent{\bf Stage $3$}. For $k$, $1\leq k\leq n$, let
\begin{equation*}
M_k=M_{k-1}\backslash\bigcup\limits_{\sigma\in\Sigma_{k-1}''}I(\sigma_0^<)
\end{equation*}
where $M_0=T_n^n$ and $\Sigma_{k-1}''=\{\sigma\in K_n^n \ : \ \dim\sigma_0^==n-k\}$. Now suppose $\sigma_1,\ldots,\sigma_l$ are the simplexes such that $\sigma=(\sigma_i)_0^=$ and $\dim(\sigma_i)_0^==n-k$. Notice if $\sigma\in\Sigma_{k-1}''$ then there exist simplexes $\tau$ and $\sigma_i$ such that $\sigma=\tau\cup\sigma_i$ with $\tau\in\chi(\Delta^{I_i})$ and $I_i=[n]\backslash\mathtt{Id}(\sigma_i)$. For each $i$, $1\leq i\leq l$, let $\tau_1^{(i)},\ldots,\tau_m^{(i)}$ be the sequence of collapses of $\chi(\Delta^{I_i})$ given in the Lemma \ref{lem:CollapsabilityChromatic}. Therefore
\begin{equation*}
\tau_1^{(1)}\cup\sigma_1,\ldots,\tau_m^{(1)}\cup\sigma_1,\ldots,\tau_1^{(l)}\cup\sigma_l,\ldots,\tau_m^{(l)}\cup\sigma_l
\end{equation*}
is a sequence of collapses leading from $M_{k-1}$ to $M_k$. Hence
\begin{equation*}
M_0\searrow\cdots\searrow M_n=\chi(\Lambda_p^n).
\end{equation*}
\qed
\end{proof}

\end{document}